\documentclass[10pt, conference, compsocconf]{IEEEtran}
\pdfoutput=1

\usepackage{amssymb}

\usepackage{amsmath} \usepackage{amsthm} \usepackage{amsfonts}
\usepackage{graphicx} \usepackage{calc} \usepackage{subfigure}
\usepackage{booktabs} \usepackage{color}
\usepackage[algoruled,linesnumbered]{algorithm2e}
\SetKwInOut{Input}{input} \SetKwInOut{Output}{output}

\usepackage{tikz}
\usetikzlibrary{patterns,snakes}

\DeclareGraphicsExtensions{.pdf,.jpeg,.png}

\title{A packing problem approach to energy-aware load distribution in Clouds}

\author{\IEEEauthorblockN{Thomas Carli\IEEEauthorrefmark{1},
St\'e{}phane Henriot\IEEEauthorrefmark{2},
Johanne Cohen\IEEEauthorrefmark{3}, and
Joanna Tomasik\IEEEauthorrefmark{1}}
\IEEEauthorblockA{\IEEEauthorrefmark{1}Computer Science Dpt., E3S SUPELEC\\
91192 Gif sur Yvette, France\\ Email: \{Thomas.Carli, Joanna.Tomasik\}@supelec.fr}
\IEEEauthorblockA{\IEEEauthorrefmark{2}INSA Rouen\\
76800 St.-Etienne-du-Rouvray, France\\
Email: Stephane.Henriot@ens-cachan.fr}
\IEEEauthorblockA{\IEEEauthorrefmark{3}LRI, University Paris Sud\\ 91190 Gif sur Yvette, France\\
Email: Johanne.Cohen@lri.fr}
}

\begin{document}

\newtheorem{theorem}{Theorem}
\newtheorem{lemma}{Lemma}
\newtheorem{proposition}{Proposition}
\newtheorem{definition}{Definition}

\maketitle

\begin{abstract}
The Cloud Computing paradigm consists in providing customers with virtual services of the quality which meets customers' requirements. A cloud service operator is interested in using his infrastructure in the most efficient way while serving customers. The efficiency of infrastructure exploitation may be expressed, amongst others, by the electrical energy consumption of computing centers.

We propose to model the energy consumption of private Clouds, which provides virtual computation services, by a variant of the Bin Packing problem. This novel generalization is obtained by introducing such constraints as: variable bin size, cost of packing and the possibility of splitting items.

We analyze the packing problem generalization from a theoretical point of view. We advance on-line and off-line approximation algorithms to solve our problem to balance the load either on-the-fly or on the planning stage. In addition to the computation of the approximation factors of these two algorithms, we evaluate experimentally their performance.

The quality of the results is encouraging. This conclusion makes a packing approach a serious candidate to model energy-aware load balancing in Cloud Computing.

\end{abstract}

\section{Introduction \label{introduction}}

The Cloud Computing paradigm consists in providing customers with virtual services of the quality which meets customers' requirements. A cloud service operator is interested in using his infrastructure in the most efficient way while serving customers. Namely, he wishes to diminish the environmental impact of his activities by reducing the amount of energy consumed in his computing servers. Such an attitude allows him to lower his operational cost (electricity bill, carbon footprint tax, etc.) as well.

Three elements are crucial in the energy consumption on a cloud platform: computation (processing), storage, and network infrastructure~\cite{BalAyr+11,BelAba+12,BelBuy+11,BerlGel+10}.  We intend to study
different techniques to reduce the energy consumption regarding these three elements. We tempt to consolidate applications on servers to keep their utilization at hundred per cent. The consolidation problem was discussed in~\cite{SriKan+08} through an experimental approach based on the intuition as its authors did not propose any formal problem definition.

In this paper we address the challenge of the minimization of energy required for processing by means of proper
mathematical modeling and we propose algorithmic solutions to minimize the energy consumption on Cloud Computing platforms. We address here a private Cloud infrastructure which operates with knowledge of resource availability.

We study a theoretical problem adjacent to the minimization of energy required to execute computational tasks. Our working hypotheses are as follows:
\begin{enumerate}
\item any computional task is parallelizable, i.e. it may be executed on several servers; there is, however, a restriction on the number of servers on which a task can be launched,
\item available servers have different computing capacities,
\item the computation cost of a server in terms of its energy consumption is monotone, i.e. a unity of computation power is cheaper on a voluminous server than on a less capacious one.
\end{enumerate}

The assumption that all tasks are divisible may sound unrealistic as in practice some tasks cannot be split. We make it in order to formulate theoretical problems and analyze them. In the real world scenario one will rather cope with jobs which either cannot be cut at all or which can be cut one, twice, up to $D$ times. Such a situation corresponds to a problem which is ``somewhere between'' two extremal cases: no jobs can be split and all jobs can be split $D$ times. As the reader will notice going through this paper, the ``real life'' problem performance bounds can be deducted from those of the extremal problems.

The three assumptions above lead us to formulate a generalization of the Bin Packing problem~\cite{CofGar+97}, which we refer to as the Variable-Sized Bin Packing with Cost and Item Fragmentation Problem (VS-CIF-P). In the general case considered a cost of packing is monotone. This problem models a distribution of computational tasks on Cloud servers which ensures the lowest energy consumption. 
Its definition is given in Subsection~\ref{mainProblemDefinition}. We point out that the approach through packing problems to the energy-aware load distribution has not yet been proposed.

Confronted with numerous constraints of the VS-CIF-P we decided to start, however, by studying in Subsection~\ref{lessConstrantedProblemDefinition} a less constrained problem, without an explicit cost function, the Variable-Sized Bin Packing and Item Fragmentation Problem (VS-IF-P). This has not yet been studied either. This gradual approach allows us to deduce several theoretical properties of the VS-IF-P which can be then extended to the principal problem. 

In Section~\ref{algorithms} we propose customized algorithms to solve the VS-CIF-P.
Willing to treat users' demands in bulk, what corresponds to regular dispatching of collected jobs (for instance, hourly) we propose an off-line method (Subsection~\ref{greedyApproach}).
An on-line algorithm, dealing with demands on-the-fly is given in Subsection~\ref{algorithmWithPerformanceBounds}. This treatment allows one to launch priority jobs which have to be processed upon their arrivals.
Expecting an important practical potential of the VS-CIF-P we also furnish results concerning the theoretical performance bounds of the algorithms we elaborated.

Despite the fact that the problem is approximate with a constant factor, we go further with the performance evaluation of the algorithms we come up with.
The empirical performance evaluation is discussed in Section~\ref{numericalResults}. 

 
The list of our contributions given above also partially constitutes the description of the paper's organization. We complete this description by saying that in Section~\ref{relatedWorks} we present a survey of related works concerning definitions of the family of bin-packing problems together with their known approximation factors. We also give there an outline of algorithmic approaches used to solve packing problems. Our special attention is put on those which inspired us in our study. We point out that the notation used in the article is also introduced in that section while carrying out our survey. 

After giving our contributions in the order announced above we draw conclusions and give directions of our further work. 

\section{Related Works \label{relatedWorks}}

Let $L$ be a list of $n$ items numbered from $1$ to $n$, $L=(s_1, s_2, \ldots, s_n)$, where $s_i$ indicates an item size. Let us also assume for a moment that for all $i=1,2,\ldots, n$ $s_i\in [0,1]$. The classical Bin Packing Problem (BPP) consists in grouping the items of $L$ into $k$ disjoint subsets, called bins, $B_1, B_2, \ldots, B_k$, $\bigcup_{l=1}^m B_l = L$, such that for any $j$, $j=1,2,\ldots,k$, $\sum_{l\in B_j}s_l \leq 1$. The question 'Can I pack all items of $L$ into $K$, $K\leq k$, bins?' defines the BPP in the decision form. Put differently, we ask whether a \emph{packing\/} $B_1, B_2, \ldots, B_k$ for which $k$ is less or equal to a given value $K$ exists. The corresponding optimization problem aims to find the minimal $k$. Due to its numerous practical applications the BPP, which is NP-hard, was studied exhaustively.

The current basic on-line approaches, Next Fit (NF) and First Fit (FF) give satisfactory results. The asymptotic approximation factor for any on-line algorithm cannot be less 
than $1.54$~\cite{CofGar+97}. A widely used off-line approach consists in sorting items in decreasing order of their size before packing them. The tight bound for First Fit Decreasing (FFD) is given in~\cite{Dos07}.

\subsection{Variable-Sized Bin Packing with Cost\label{VSBPCP}}

In the initial problem the capacity of all bins is unitary. The problem may thus be modified by admitting different bin capacities.
A bin can have any of $m$ possible capacities $b_j$, $B= (b_1, b_2, \ldots, b_m)$. In other words, we have $m$ bin classes.

If any bin is as good as the others, putting items inside a solution to this problem is trivial, as one will always be  interested in using the largest bin. We thus suppose that bin utilization induces a certain cost associated to this bin. This assumption leads to the Variable Sized Bin Packing with Cost Problem (VSBPCP). The reader might already observe that in the BPP the cost of packing is always the same regardless of the bins chosen. This fact explains that the new problem is NP-hard~\cite{KanPar03}. Intuitively, one can consider that the cost of packing varies in function of a bin capacity. From this point of view we are no longer interested in minimizing the number of bins used but in minimizing the global packing cost as a voluminous bin which remains 'almost empty' may be more expensive in use than several little bins 'almost totally' full. Solving the VSBPCP we have at our disposal $m$ classes of bins and the infinite number of bins of any class available. 

In the simplest case, the cost is a linear function of a capacity.
Packing into bin $i$ costs $c_i$, $C= (c_1, c_2, \ldots, c_m)$ and we have as many costs as bin classes available.
Similarly to the notation introduced above, we denote a cost of a bin $B_j$ taking part in a packing as $\mbox{cost}(B_j) = c_l$ (a bin in position $j$ in a packing costs $c_l$).
The goal is thus to find a packing $B_1, B_2, \ldots, B_k$ such that $\sum_{l\in B_j}s_l \leq \mbox{capacity}(B_j)$ for which the overall cost, $\sum_{l=1}^k \mbox{cost}(B_l)$, is minimal.

Without loss of generality one may assume that the cost of using bin $i$ is equal to its capacity, $c_i=b_i$, $i=1,2,\ldots, m$.
For a packing we thus have: $\mbox{cost}(B_j) = \mbox{capacity}(B_j)$.  

A monotone cost function signifies that a unity in a bigger bin $i$ is not more expensive than a unity in a smaller one, bin $j$: $\frac{c_i}{b_i}\leq \frac{c_j}{b_j}$, $i,j=1,2,\ldots,m$, $b_i>b_j$ and the cost of a smaller bin is not greater than the cost of a bigger bin,  $c_j\leq c_i$. A linear cost function is a special case of a monotone one.

\subsubsection{Monotone cost, off-line approach \label{IFFD}}

Staying in the context of a monotone cost, our attention was attracted by the off-line algorithms from~\cite{KanPar03}. Their main idea consists in applying an iterative approach to a well-known algorithm, for example, FFD which leads to IFFD.

In a nutshell, at the beginning IFFD performs the classical FFD with identical bins of the greatest capacity, $\max_{b_i}(B)$. The packing obtained is next modified by trying to move (again with FFD) all the items from the last bin into the next biggest bin. The repacking procedure continues by transferring  items entirely from the bin of capacity $b_j$, which was the last one filled up, into a bin of size $b_i$, $b_i<b_j$ and there is not any $l$ such that $b_i<b_l<b_j$. It stops when any further repacking becomes impossible. Their authors showed that the solutions are approximated with $1.5$.   

\subsubsection{Linear cost, on-line approach \label{FFf}}

We pay special attention to the on-line algorithm to solve the VSBPCP described in~\cite{KinLan89}. This algorithm deals with a linear cost.
Its authors proposed an approach 'in between' the First Fit, using Largest possible bin (FFL)\footnote{A largest possible bin is here a unit-capacity bin.} and the First Fit, using Smallest possible bin (FFS) trying to take advantage of both methods with regard to the size of the item to be packed. Their idea is to determine whether an item to be packed occupies a lot of space or not. The decision is taken upon a fill factor $f$, $f\in [0.5,1]$. Their algorithm, called FFf, operates in the following way. If item $i$ is small (i.e. $s_i\leq 0.5$), it will be inserted into the first bin in which it enters or into a new bin of unitary capacity when it does not fit into any opened bin (FFL). Otherwise, it will be inserted into the first opened bin into which it enters. If the use of an opened bin is impossible, it will be packed into the smallest bin among those whose capacity is between $s_i$ and $\frac{s_i}{f}$, if it fits inside, or into a new unit-capacity bin if it does not (FFS). The authors of FFf proved 
that the result it furnishes is approximated by $1.5 + \frac{f}{2}$.

\subsection{Bin Packing with Item Fragmentation \label{BPIFP}}

In another variant of the classical BPP one is allowed to fragment items while the identical bin size and bin cost remain preserved, the Bin Packing with Item Fragmentation Problem (BPIFP).
Item cutting may reduce the number of bins required. On the other hand, if the item fragmentation is not for free, it may increase the overall cost of packing.
In~\cite{MenRom01,NaaRom02,ShaTam+06}, for instance, its authors investigated two possible expenses: the item size growth which results from segmentation and the global limit on the number of items cut.
We point out that one may also consider a limit on the number of items permitted to be packed into a bin. A variant of the BPP fixing such a limit was introduced and studied in~\cite{KraShe+75,KraShe+77}.
It models task scheduling in multiprogramming systems and is known as the Bin Packing with Cardinality Constraints Problem (BPCCP). 

From our particular perspective, founded upon the virtualization of computing services in Clouds, we opt to restrain the number of fragments into which an item can be cut.
The maximal number of cuts for any item, which models a computation task, is limited to $D$.
As certain items from list $L$ should be fragmented before packing, we do not cope with items $i$ but with their fragments whose sizes are noted as $s_{i_d}$, where $i$ indicates an original item $i$ from $L$ and $d$ enumerates fragments of item $i$. Let $D_i$ be a number of cuts of item $i$ made. Obviously, $D_i=0$ signifies that item $i$ has not been fragmented at all. Moreover, for any $i$ we have $D_i\leq D$ and $\sum_{l=1}^{D_i +1} s_{i_l} = s_i $. Thus the solution to the BPIFP with limit $D$ consists in finding an appropriate fragmentation first, which results in a new list of sizes of items to be packed
\begin{equation}
\begin{array}{lll}
L_D & = & \left((s_{1_1}, s_{1_2}, \ldots, s_{1_{D_1 +1}}), (s_{2_1}, s_{2_2}, \ldots, s_{2_{D_2 +1}}), \ldots, \right. \\
& & \left. \ldots, (s_{n_1}, s_{n_2}, \ldots, s_{n_{D_n +1}})\right)
\end{array}
\label{LD}
\end{equation}
(actually, this a list of lists). The number of items to be packed is now $\sum_{i=1}^{n}(D_i + 1)$. Next, the BPP is to be solved with $L_D$ as input data.

\subsection{Survey conclusions \label{surveyConclusions}}
To the best of our knowledge, neither the problem being in the center of our interest, the Variable-Sized Bin Packing with Cost and Item Fragmentation Problem (VS-CIF-P), which we propose to model an energy-aware load distribution nor the less constrained one, the Variable-Sized Bin Packing and Item Fragmentation Problem (VS-IF-P), have been studied yet. 

\section{Problem Definitions and Analysis \label{problemDefinitions}}

As announced above, we start by treating the auxiliary problem. It will be generalized after its analysis.

\subsection{Auxiliary Problem \label{lessConstrantedProblemDefinition}} 

We suppress the explicit cost function in the general problem. By doing this we expect to be able to find an optimal solution to the auxiliary VS-IF-P with polynomial complexity for certain particular cases. We recall to the reader here that we limit the number of cuts of any individual item.

\begin{definition}
\label{defVSIFP}
{\bf Variable-Sized Bin Packing with Item Fragmentation Problem (VS-IF-P)}\\
{\bf Input:}
	\begin{itemize}
	\item $n$ items to be packed,
	\item sizes of items to be packed $L=(s_1, s_2, \ldots, s_n)$, $s_i \in \mathbb{N^+}$, $i=1,2,\ldots n$,
	\item capacities of bins available $B= (b_1, b_2, \ldots, b_m)$, $ b_j \in \mathbb{N^+}$, $j=1,2,\ldots m$,
	\item a constant $D$ which limits the number of splits authorized for each item, $D \in \mathbb{N^+}$,
	\item a constant $k$, $k \in \mathbb{N^+}$ which signifies the number of bins used.
	\end{itemize}		
\noindent {\bf Question:}\\
	Is it possible to find a packing $B_1, B_2, \ldots, B_K$ of items $L$ whose fragment sizes are in $L_D$ defined as in Eq.~\eqref{LD} such that $K\leq k$?
\end{definition}	

In the analysis of the VS-IF-P we appeal to a variant of the BPP coming from the memory allocation modeling~\cite{ChuGra+06,EpsSte11}. Its particularity consists in having a limit on the number of items in any bin. This limit holds without regard to whether an item inserted is 'an original one' or results from the item fragmentation itself. Thus this problem may be considered as a variant of the BPCCP (see Subsection~\ref{BPIFP}) with item fragmentation. For our purposes we call it the Memory Allocation with Cuts Problem (MACP) and we give below its formal definition using our notation. We believe that this formal presentation allows the reader to discover a palpable duality existing between the VS-IF-P and the MACP: in the first one there is a limit on the cut number of any item, in the latter we have a limit on the number of 'cuts' in any bin.

As the MACP admits the item fragmentation, the objects which it packs are picked from the following list:
\begin{equation}
\begin{array}{lll}
L'_{D'} & = & \left((s'_{1_1}, s'_{1_2}, \ldots, s'_{1_{d_1 +1}}), (s'_{2_1}, s'_{2_2}, \ldots, s'_{2_{d_2 +1}}), \ldots, \right.\\
 & &\left. \ldots, (s'_{n_1}, s'_{n_2}, \ldots, s'_{n_{d_n +1}})\right),
\end{array}
\label{LdMACP}
\end{equation}
where $\sum_{l=1}^{d_i + 1}s'_{i_l} = s'_i$, $i=1,2,\ldots, n$. The number of splits of item $i$, $d_i$, is not \emph{a priori\/} limited but one cannot cut items at will. The values of $d_i$ will be determined later on by the constraint restricting the number of pieces in a bin, $D'$. 

\begin{definition}
\label{defMACP}
{\bf Memory Allocation with Cuts Problem (MACP)}\\
{\bf Input:}
	\begin{itemize}
	\item $n'$ items to be packed,
	\item sizes of items to be packed $L'=(s'_1, s'_2, \ldots, s'_n)$, $s'_i \in \mathbb{N^+}$, $i=1,2,\ldots n'$,
	\item a capacity $b'$ of each bin,
	\item a constant $D'$ which limits the number of items authorized inside each bin (no more than $D'+1$ pieces inside a bin), $D' \in \mathbb{N^+}$,
      \item a constant $k'$, $k' \in \mathbb{N^+}$ which signifies the number of bins used.
	\end{itemize}		
\noindent {\bf Question:}\\
	Is it possible to find a packing $B'_1, B'_2, \ldots, B'_{K'}$ of elements whose sizes are in $L'_{D'}$ defined as in Eq.~\eqref{LdMACP} such that for each $l$, $l=1,2,\ldots K'$, $|B'_l|\leq D'$ and $K'\leq k'$?
\end{definition}	

\begin{theorem}
The VS-IF-P is NP-complete in the strong sense.
\label{VSIFPtheoremNP}
\end{theorem}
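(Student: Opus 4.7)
My plan is to verify membership in NP first and then reduce from a strongly NP-hard problem under a reduction that preserves the magnitudes of the numerical parameters. For NP membership, a certificate consists of the fragmented list $L_D$ of Eq.~\eqref{LD} together with the assignment of each fragment to one of $K$ bins, each tagged with a capacity chosen from $B$. Since every item is cut at most $D$ times, $L_D$ contains at most $n(D+1)$ fragments, so the certificate has polynomial size. Verification amounts to checking that for every original item $i$ the fragment sizes sum to $s_i$ with $D_i \leq D$, that in every bin the sum of fragment sizes does not exceed the assigned capacity from $B$, and that $K \leq k$; all of this runs in polynomial time.

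For strong NP-hardness I reduce from the classical Bin Packing Problem, which is strongly NP-complete via the standard reduction from 3-PARTITION~\cite{CofGar+97}. Given a BPP instance with items $s_1, \ldots, s_n$, bins of capacity $b$ and target $k$, I construct a VS-IF-P instance with the same items, a single bin class $B = (b)$ (so $m = 1$), cut bound $D = 0$, and the same target $k$. Setting $D = 0$ forbids fragmentation, so $L_D = L$ and the feasible packings of the VS-IF-P instance are exactly those of the original BPP instance, whence both instances have the same answer. The transformation copies item sizes and the bin capacity verbatim, so it does not enlarge any numerical parameter: the reduction is pseudo-polynomial and transfers strong NP-hardness.

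The only point that requires attention is ensuring that the certificate size remains polynomial in the input, which is automatic as long as $D$ is part of the input encoding (so that $n(D+1)$ is polynomial in the instance size). Beyond this bookkeeping, the argument simply observes that VS-IF-P contains classical BPP as a trivial restriction, which immediately delivers the claim. An alternative, more in the spirit of the duality the authors emphasize between the VS-IF-P and the MACP of Definition~\ref{defMACP}, would be to invoke the strong NP-hardness of the MACP from~\cite{ChuGra+06,EpsSte11} and translate it, but the direct reduction from BPP is the shortest path and will be my preferred route.
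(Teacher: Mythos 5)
Your NP-membership argument is fine, but the hardness reduction has a genuine gap. You instantiate the VS-IF-P with $D=0$ so that fragmentation is forbidden and the instance collapses to classical Bin Packing. Definition~\ref{defVSIFP}, however, requires $D \in \mathbb{N^+}$, i.e.\ $D \geq 1$, and the paper itself treats $D=0$ as degenerating to a different problem (in Section~\ref{results}: ``the fragmentation ban ($D=0$) signifies that the problem solved is simply the VSBPCP''). So the instances you build are not legal VS-IF-P instances. The restriction is not a technicality you can wave away: for any admissible $D \geq 1$ your construction breaks, because granting each item even a single cut can strictly reduce the number of bins needed, so the BPP instance and the constructed VS-IF-P instance no longer have the same answer. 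More substantively, even if one were to tolerate $D=0$, your argument only certifies hardness of the fragmentation-free boundary case; it says nothing about the regime where splitting is genuinely available, which is precisely what distinguishes the VS-IF-P from the VSBPP.

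The route you mention and discard is in fact the one the paper takes. It reduces from the MACP (strongly NP-complete by the reference you cite), exploiting the duality between the two problems: the $k'$ identical bins of capacity $b'$ of the MACP become $k'$ items of size $b'$ in the VS-IF-P, the MACP items become the variable-sized bins, one sets $D = D'$, and one restricts attention to instances with $k'b' = \sum_{b_i \in L'} b_i$ so that every bin in a solution is completely full. Under that tightness condition, packing items into bins is equivalent to ``inserting bins into items,'' and the MACP's limit of $D'+1$ pieces per bin translates exactly into the VS-IF-P's limit of $D+1$ fragments per item. This works for every $D \geq 1$ and establishes hardness of the problem with fragmentation actually permitted, which your shortcut does not. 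To repair your proof you would need to adopt this (or another) reduction that respects $D \geq 1$.
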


\begin{proof}
It is easy to see that the VS-IF-P is in NP. In this proof, as in Def.~\ref{defMACP}, we consequently use  a prime symbol when referring to an instance of the MACP. 
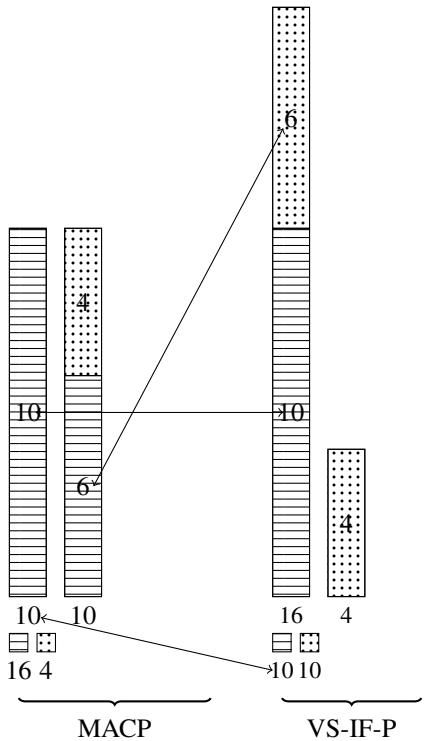
\begin{figure}[!h]
 	\centering
\begin{tikzpicture}[scale=0.7]
\draw (0.0, 0) rectangle (0.7, 7.0);
\draw (1.0499999999999998, 0) rectangle (1.75, 7.0);

\node at (0.35, -0.35) {10};
\node at (1.4, -0.35) {10};

\draw [pattern=horizontal lines] (0.0, 0.0) rectangle (0.7, 7.0);
\draw [pattern=horizontal lines] (1.0499999999999998, 0.0) rectangle (1.75, 4.199999999999999);
\draw [pattern=dots] (1.0499999999999998, 4.199999999999999) rectangle (1.75, 7.0);

\node at (0.35, 3.5) {10};
\node at (1.4, 2.0999999999999996) {6};
\node at (1.4, 5.6) {4};

\draw [pattern=horizontal lines] (0.0, -1.0499999999999998) rectangle (0.35, -0.7);
\draw [pattern=dots] (0.5249999999999999, -1.0499999999999998) rectangle (0.875, -0.7);

\node at (0.175, -1.4) {16};
\node at (0.7, -1.4) {4};

\node (fix) at (0,-1.5) {};
\node (fix2) at (4,-1.5) {};
\draw [
    thick,
    decoration={
        brace,
        mirror,
        raise=0.3cm
    },
    decorate
] (fix) -- (fix2) ;
\node[draw=none] at (2,-2.5) {MACP};

\draw[arrows=<->](0.6,-0.4)--(5,-1.4);
\draw[arrows=<->](0.5,3.5)--(5.2,3.5);
\draw[arrows=<->](1.6,2.1)--(5.2,8.9);

\draw (5, 0) rectangle (5.7, 11.2);
\draw (6.0499999999999998, 0) rectangle (6.75, 2.8);

\node at (5.35, -0.35) {\small 16};
\node at (6.4, -0.35) {\small 4};

\draw [pattern=horizontal lines] (5, 0.0) rectangle (5.7, 7.0);
\draw [pattern=dots] (5.0, 7.0) rectangle (5.7, 11.2);
\draw [pattern=dots] (6.0499999999999998, 0.0) rectangle (6.75, 2.8);

\node at (5.35, 3.5) {10};
\node at (5.35, 9.1) {6};
\node at (6.4, 1.4) {4};

\draw [pattern=horizontal lines] (5.0, -1.0499999999999998) rectangle (5.35, -0.7);
\draw [pattern=dots] (5.5249999999999999, -1.0499999999999998) rectangle (5.875, -0.7);

\node at (5.175, -1.4) {\small 10};
\node at (5.7, -1.4) {\small 10};

\node (fix) at (5,-1.5) {};
\node (fix2) at (8,-1.5) {};
\draw [
    thick,
    decoration={
        brace,
        mirror,
        raise=0.3cm
    },
    decorate
] (fix) -- (fix2) ;
\node[draw=none] at (6.5,-2.5) {VS-IF-P};

\end{tikzpicture}
	\caption{MACP/VS-IF-P instance transformation}
	\label{instancesProofVSIFPtheoremNP}
 \end{figure} 
 
First, we demonstrate that the VS-IF-P is NP-complete in the strong sense by reducing the MACP to it as the NP-completeness in the strong sense of the MACP was shown in~\cite{EpsSte11}.
For both instances, $I$ and $I'$, we put $D=D'$.

Bins of the MACP become $k'$ items to be packed in the VS-IF-P, $L=(b',b',\ldots, b')$, $n=k'$. Items of the MACP are transformed into variable-sized bins, $L'=B$. Finally, we require that all available bins of the VS-IF-P are used: $|L'|= |B| =k$. Obviously, this transformation, also illustrated  in Figure~\ref{instancesProofVSIFPtheoremNP}, can be performed in polynomial time.

We focus our attention on a particular case $k'b'=\sum_{b_i\in L'}b_i$ in which there is no empty space left in bins forming a solution.
It is evident that in this situation the packing of items into bins corresponds to 'inserting' bins into items. 
If the verification gives a positive answer for one instance, it will give a positive answer for another, too. An illustrative example, $D=d'=1$, is given in Figure~\ref{exampleProofVSIFPtheoremNP}.
The overall items' mass is $40$. Thus we need  four bins of capacity $10$ to pack $L'=(16,15,9)$ for the instance $I'$ of the MACP depicted on the left of Figure~\ref{exampleProofVSIFPtheoremNP}.
The instance $I$ of the VS-IF-P (on the right of Figure~\ref{exampleProofVSIFPtheoremNP}) is composed of four items of size $10$ and $B=(16,15,9)$.  
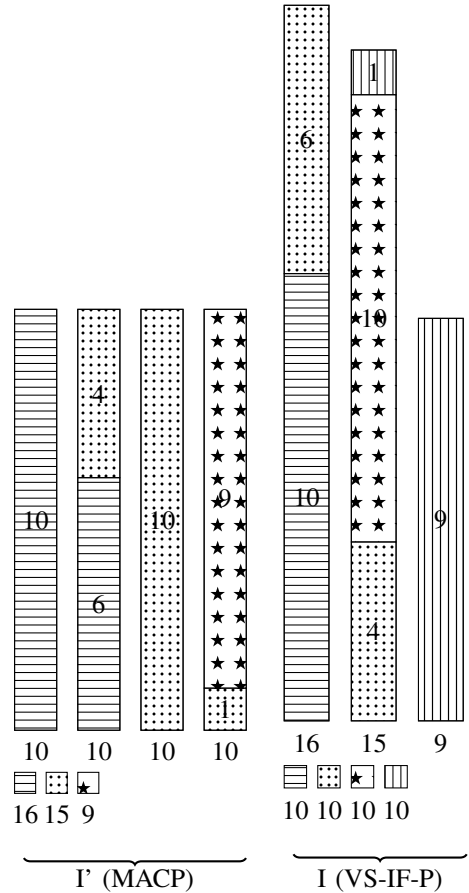
\begin{figure}[!h]
 	\centering
        \begin{tikzpicture}[scale=0.8]
\draw (0.0, 0) rectangle (0.7, 7.0);
\draw (1.0499999999999998, 0) rectangle (1.75, 7.0);
\draw (2.0999999999999996, 0) rectangle (2.8, 7.0);
\draw (3.15, 0) rectangle (3.8499999999999996, 7.0);
\node at (0.35, -0.35) {10};
\node at (1.4, -0.35) {10};
\node at (2.4499999999999997, -0.35) {10};
\node at (3.5, -0.35) {10};
\draw [pattern=horizontal lines] (0.0, 0.0) rectangle (0.7, 7.0);
\draw [pattern=horizontal lines] (1.0499999999999998, 0.0) rectangle (1.75, 4.199999999999999);
\draw [pattern=dots] (1.0499999999999998, 4.199999999999999) rectangle (1.75, 7.0);
\draw [pattern=dots] (2.0999999999999996, 0.0) rectangle (2.8, 7.0);
\draw [pattern=dots] (3.15, 0.0) rectangle (3.8499999999999996, 0.7);
\draw [pattern=fivepointed stars] (3.15, 0.7) rectangle (3.8499999999999996, 7.0);
\node at (0.35, 3.5) {10};
\node at (1.4, 2.0999999999999996) {6};
\node at (1.4, 5.6) {4};
\node at (2.4499999999999997, 3.5) {10};
\node at (3.5, 0.35) {1};
\node at (3.5, 3.8499999999999996) {9};
\draw [pattern=horizontal lines] (0.0, -1.0499999999999998) rectangle (0.35, -0.7);
\draw [pattern=dots] (0.5249999999999999, -1.0499999999999998) rectangle (0.875, -0.7);
\draw [pattern=fivepointed stars] (1.0499999999999998, -1.0499999999999998) rectangle (1.4, -0.7);
\node at (0.175, -1.4) {16};
\node at (0.7, -1.4) {15};
\node at (1.2249999999999999, -1.4) {9};
\node (fix) at (0,-1.5) {};
\node (fix2) at (4,-1.5) {};
\draw [
    thick,
    decoration={
        brace,
        mirror,
        raise=0.5cm
    },
    decorate
] (fix) -- (fix2) ;
\node[draw=none] at (2,-2.5) {I' (MACP)};

\end{tikzpicture}
\begin{tikzpicture}[scale=0.85]
\draw (0.0, 0) rectangle (0.7, 11.2);
\draw (1.0499999999999998, 0) rectangle (1.75, 10.5);
\draw (2.0999999999999996, 0) rectangle (2.8, 6.3);
\node at (0.35, -0.35) {16};
\node at (1.4, -0.35) {15};
\node at (2.4499999999999997, -0.35) {9};
\draw [pattern=horizontal lines] (0.0, 0.0) rectangle (0.7, 7.0);
\draw [pattern=dots] (0.0, 7.0) rectangle (0.7, 11.2);
\draw [pattern=dots] (1.0499999999999998, 0.0) rectangle (1.75, 2.8);
\draw [pattern=fivepointed stars] (1.0499999999999998, 2.8) rectangle (1.75, 9.799999999999999);
\draw [pattern=vertical lines] (1.0499999999999998, 9.799999999999999) rectangle (1.75, 10.5);
\draw [pattern=vertical lines] (2.0999999999999996, 0.0) rectangle (2.8, 6.3);
\node at (0.35, 3.5) {10};
\node at (0.35, 9.1) {6};
\node at (1.4, 1.4) {4};
\node at (1.4, 6.3) {10};
\node at (1.4, 10.149999999999999) {1};
\node at (2.4499999999999997, 3.15) {9};
\draw [pattern=horizontal lines] (0.0, -1.0499999999999998) rectangle (0.35, -0.7);
\draw [pattern=dots] (0.5249999999999999, -1.0499999999999998) rectangle (0.875, -0.7);
\draw [pattern=fivepointed stars] (1.0499999999999998, -1.0499999999999998) rectangle (1.4, -0.7);
\draw [pattern=vertical lines] (1.575, -1.0499999999999998) rectangle (1.9249999999999998, -0.7);
\node at (0.175, -1.4) {10};
\node at (0.7, -1.4) {10};
\node at (1.2249999999999999, -1.4) {10};
\node at (1.75, -1.4) {10};
\node (fix) at (0,-1.5) {};
\node (fix2) at (3,-1.5) {};
\draw [
    thick,
    decoration={
        brace,
        mirror,
        raise=0.5cm
    },
    decorate
] (fix) -- (fix2) ;
\node[draw=none] at (1.5,-2.5) {I (VS-IF-P)};
\end{tikzpicture}
	\caption{Example of solutions to MACP and VS-IF-P instances ($I'$ and $I$, respectively). Instance $I'$ is composed of three items whose sizes are $16$, $15$, $9$ and bins of capacity $10$. Instance $I$ has four items of size $10$ and three bin classes of capacity: $16$, $15$, and $9$.}
	\label{exampleProofVSIFPtheoremNP}
 \end{figure}

Second, we estimate the computational effort required to obtain a positive response to the question whether 'a candidate to be a solution' is a solution.
In order to do this we determine a size $N_i$ of the VS-IF-P instances and a size $N_s$ of solutions to it.
The greatest elements of both lists determine $N_i$, which is thus in $\mathcal{O} (\log k + (m+n)\log \max(\max_{s_i}(L), \max_{b_i}(B)))$.
Solutions are made up of bins and 'quantities' of items, possibly split, selected in $L$. Similarly, we take into account the most voluminous elements of the lists, which leads us to $N_s$ not greater than $nm(\log m + \log \max(\max_{s_i}(L), \max_{b_i}(B)))$, being in $\mathcal{O}( N_i^2)$ (a polynomial verification time).
\end{proof}

In order to analyze the feasibility of solutions to the VS-IF-P we assume for a moment that fragments resulting from the item splitting are equal in size. We have to be assured that any fragment of the greatest item can be inserted entirely into the highest capacity bin:
\begin{equation} \frac{\max_{s_l} (L)}{D+1}\leq \max_{b_i} (B).
\label{cutEqualParts} \end{equation}
If we can pack items of an instance of the VS-IF-P with such a fragmentation, we can do the same for the VSBPP instance, whose items to be packed are simply those of the VS-IF-P split. This reasoning allows us to adapt numerous algorithms existing for the VSBPP to solve the VS-IF-P by incorporating cutting. Eq.~\eqref{cutEqualParts} guarantees the existence of a solution  even in cases when items are split into 'almost equal' pieces because a single 'over-sized' fragment will be not greater than $\frac{\max_{s_l} (L)}{D+1}$. We propose to admit cutting in FF (Cut and FF, CFF). Despite sorting the bin classes in decreasing order of their capacity, the on-line principle is preserved as items are not reordered before their cut and insertion. An illustration of a CFF execution with 'imperfect cuts' is presented in Figure~\ref{exampleCFF}.
\begin{figure}[h]
 	\centering
        \begin{tikzpicture}
\draw (0.0, 0) rectangle (0.75, 7.5);
\draw (1.125, 0) rectangle (1.875, 7.5);
\draw (2.25, 0) rectangle (3.0, 7.5);
\draw (3.375, 0) rectangle (4.125, 7.5);
\draw (4.5, 0) rectangle (5.25, 7.5);
\draw (5.625, 0) rectangle (6.375, 7.5);
\node at (0.375, -0.375) {10};
\node at (1.5, -0.375) {10};
\node at (2.625, -0.375) {10};
\node at (3.75, -0.375) {10};
\node at (4.875, -0.375) {10};
\node at (6.0, -0.375) {10};
\draw [pattern=horizontal lines] (0.0, 0.0) rectangle (0.75, 5.25);
\draw [pattern=horizontal lines] (3.375, 0.0) rectangle (4.125, 4.5);
\draw [pattern=dots] (1.125, 0.0) rectangle (1.875, 5.25);
\draw [pattern=dots] (4.5, 0.0) rectangle (5.25, 4.5);
\draw [pattern=fivepointed stars] (2.25, 0.0) rectangle (3.0, 5.25);
\draw [pattern=fivepointed stars] (5.625, 0.0) rectangle (6.375, 4.5);
\node at (0.375, 2.625) {7};
\node at (3.75, 2.25) {6};
\node at (1.5, 2.625) {7};
\node at (4.875, 2.25) {6};
\node at (2.625, 2.625) {7};
\node at (6.0, 2.25) {6};
\draw [pattern=horizontal lines] (0.0, -1.125) rectangle (0.375, -0.75);
\draw [pattern=dots] (0.5625, -1.125) rectangle (0.9375, -0.75);
\draw [pattern=fivepointed stars] (1.125, -1.125) rectangle (1.5, -0.75);
\node at (0.1875, -1.5) {13};
\node at (0.75, -1.5) {13};
\node at (1.3125, -1.5) {13};
\end{tikzpicture}
	\caption{Example CFF solving an VS-IF-P instance; observe that an imperfect cut do not compromise the solution feasibility}
	\label{exampleCFF}
 \end{figure}
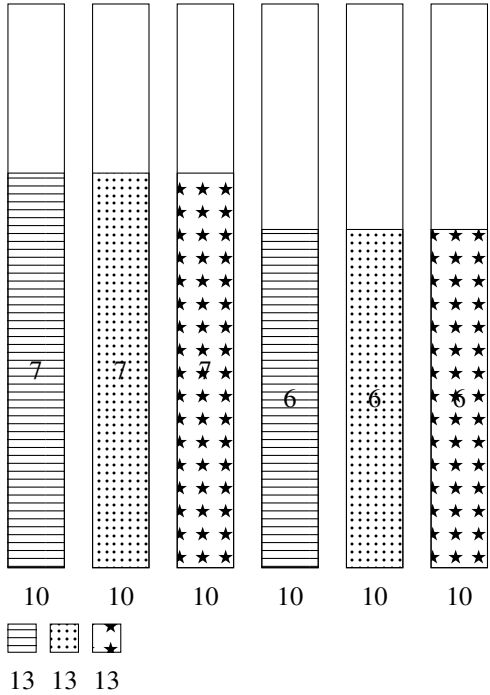
The discussion above, which exhibits a relationship between instances of the VS-IF-P and the VSBPP allows us to apply FF (or NF). In solutions obtained with these algorithms one bin at most is less than half full. 

On the other hand, if bins are of capacity $2s$, where $s$ is a natural number, items are of size $2(s+1)$ and any item can be cut no more than once, $D=1$, we cannot expect a better approximation factor than $2$. We visualize this example by imagining in~Figure~\ref{exampleCFF} the item size equal to $12$ and keeping bin capacity equal to $10$.  This observation leads us to formulate:
\begin{theorem} 
A solution to the VS-IF-P for any $D$ with CFF is tightly bounded by $2$.
\label{theoremApproxCFF} 
\end{theorem}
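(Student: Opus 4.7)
The statement has two components: the upper bound (CFF uses at most $2\cdot\mathrm{OPT}$ bins) and a matching lower bound (an instance on which the ratio approaches $2$). I would handle them in sequence, borrowing the reduction to the VSBPP already set up in the paragraph preceding the theorem.

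For the upper bound I would exploit the fact that CFF first cuts each item of size $s_i$ into $D+1$ (almost) equal pieces of size at most $s_i/(D+1)\leq\max_{b_j}(B)=b_{\max}$, a bound guaranteed by the feasibility condition~\eqref{cutEqualParts}, and then applies First Fit with bins of capacity $b_{\max}$ to the resulting pieces. The crux is the classical First Fit pairing argument: if $B_i$ and $B_{i+1}$ are two consecutive opened bins, the piece $p$ that triggered the opening of $B_{i+1}$ did not fit into $B_i$ at that instant, so $\mbox{content}(B_i)+|p|>b_{\max}$; since $p$ ultimately contributes to $B_{i+1}$, this yields $\mbox{content}(B_i)+\mbox{content}(B_{i+1})>b_{\max}$. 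Summing over the $K-1$ adjacent pairs gives $\sum_l s_l>(K-1)\,b_{\max}/2$, where $K$ is the number of bins used by CFF. Because any optimal packing uses bins of capacity at most $b_{\max}$, we also have $\sum_l s_l\leq\mathrm{OPT}\cdot b_{\max}$. Combining the two inequalities yields $K<2\cdot\mathrm{OPT}+1$, hence $K\leq 2\cdot\mathrm{OPT}$.

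For tightness I would formalize the construction sketched just before the theorem. Take a single bin class of capacity $2s$ and $n$ identical items each of size $2(s+1)$, with $D=1$. Cutting an item into two equal pieces produces fragments of size $s+1$; since $2(s+1)>2s$, no two such fragments can share a bin, and CFF is forced to use exactly $2n$ bins. An optimal strategy cuts each item asymmetrically as $(2s,2)$: the $n$ pieces of size $2s$ fill $n$ bins exactly, and the $n$ pieces of size $2$ pack $s$ per bin into $\lceil n/s\rceil$ additional bins. Choosing $n=st$ yields $\mathrm{OPT}=t(s+1)$, so the ratio is $2st/(t(s+1))=2s/(s+1)$, which tends to $2$ as $s\to\infty$. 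Thus no approximation factor smaller than $2$ is achievable by CFF.

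The main subtlety I anticipate is pinning down precisely how CFF chooses the capacity of a newly opened bin, since the description of CFF is brief and bins have several possible sizes. I would resolve this by fixing the natural reading (consistent with the paper's remark that bin classes are sorted in decreasing order of capacity) that CFF opens bins of the largest capacity $b_{\max}$, which makes the First Fit pairing inequality apply with a uniform capacity. Were CFF allowed to open smaller bins instead, the pairing bound would only need a capacity-weighted refinement and the $2$-approximation would still survive via the unchanged lower bound $\mathrm{OPT}\geq\sum_l s_l/b_{\max}$.
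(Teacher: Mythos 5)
Your proposal is correct and follows essentially the same route as the paper, which only sketches the argument in the paragraphs preceding the theorem: the upper bound via the classical First Fit property that (all but one of) the opened bins of capacity $b_{\max}$ are more than half full, combined with $\mathrm{OPT}\geq S_L/b_{\max}$, and tightness via the very instance the paper describes (bins of capacity $2s$, items of size $2(s+1)$, $D=1$). Your write-up merely makes explicit what the paper leaves implicit, including the reasonable reading that CFF opens bins of the largest class.
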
   

\subsection{Main Problem \label{mainProblemDefinition}}

As stated above, the problem which models the distribution of computation tasks within a private Cloud infrastructure is a bin packing in which bins are of different sizes and tasks can be split over several servers.
We assume that all numerical data (item sizes, bin capacities, costs) are natural numbers. We also assume that a task is parallelizable (the discussion of our hypothesis can be found in Section~\ref{introduction}).
We allow an item to be cut into no more than $D+1$ pieces, i.e. any item may be split at most $D$ times. Indeed, if any number of cuts was admitted, we might cut all tasks into unitary pieces and end up with a trivial packing of $\sum_{i=1}^n s_i$ unitary objects being able to fill up any used bin entirely.  

As the reader might have already notice while passing through Sections~\ref{introduction} and~\ref{relatedWorks}, our problem, the Variable-Sized Bin Packing with Cost and Item Fragmentation Problem (VS-CIF-P), puts together the three problems announced above.  To be more precise, we 'mix up' the VSBPCP and BPIFP, the latter with the constraints which have just been discussed. Unless stated differently, the cost function is monotone.
  
\begin{definition}
\label{defVSLCIFP}
{\bf Variable-Sized Bin Packing with Cost and Item Fragmentation Problem (VS-CIF-P)}\\
{\bf Input:}
	\begin{itemize}
	\item $n$ items to be packed,
	\item sizes of items to be packed $L=(s_1, s_2, \ldots, s_n)$,$s_i \in \mathbb{N^+}$, $i=1,2,\ldots n$,
	\item capacities of bins available $B= (b_1, b_2, \ldots, b_m)$, $ b_j \in \mathbb{N^+}$, $j=1,2,\ldots m$,
	\item costs of using of bins available $C= (c_1, c_2, \ldots, c_m)$, $ c_j \in \mathbb{N^+}$, $j=1,2,\ldots m$,
	\item a constant $D$ which limits the number of splits authorized for each item, $D \in \mathbb{N^+}$,
	\item a constant $e$, $e \in \mathbb{N^+}$ which signifies the cost limit of a packing.
	\end{itemize}		
\noindent {\bf Question:}\\
	Is it possible to find a packing $B_1, B_2, \ldots, B_k$ of items $L$  whose fragment sizes are in $L_D$ defined as in Eq.~\eqref{LD} such that $\sum_{l=1}^k \mbox{cost}(B_l)\leq e$?
\end{definition}

\section{Algorithms \label{algorithms}}

Before introducing our methods to solve the VS-CIF-P (Def.~\ref{defVSLCIFP}) we will discuss the solution to the auxiliary VS-IF-P (Def.~\ref{defVSIFP}), in order to select the algorithmic approaches the best adapted to treat our principal problem. Indeed, as we presumed (Subsection~\ref{lessConstrantedProblemDefinition}), the VS-IF-P can be solved exactly and in polynomial time under a certain hypothesis.

\subsection{Next Fit with Cuts (NFC) for the auxiliary problem \label{NFC}}
Let us hypothesize that we are dealing with the instances for which we can always find a bin to pack any entire item (without cutting it). This hypothesis may be expressed by:
\begin{equation}
\max_{s_i} (L) \leq \max_{b_i} (B).
\label{StrongH}
\end{equation} 
If this hypothesis is satisfied, we propose a variant of NF, Next Fit with Cuts (NFC) as an algorithmic solution. A similar approach was used for another purpose under the name of NF$_{\mbox{\scriptsize f}}$ in~\cite{MenRom01,NaaRom02}. First, the capacities of bin classes are sorted in decreasing order.
 Next, for each item, if there is some room in a current bin, we pack the item inside, cutting it if necessary, and inserting the second fragment of the item into the next bin. We observe that with Hypothesis~\eqref{StrongH} valid, NFC fragments any item at most once. Moreover, when this hypothesis is satisfied, the packing problem with variable-sized bins and item fragmentation is in $P$.

Indeed, the verification whether Hypothesis~\eqref{StrongH} holds or not can be performed in $\mathcal{O} (m)$ or $\mathcal{O}(n)$ depending upon the relationship existing between $m$ and $n$. An execution of NFC requires $\mathcal{O}(m \log m + \max (n,m))$ operations. Lastly, we observe that the bins used are all totally filled up and they are of the greatest available capacities, which proves the algorithm's optimality.

This discussion leads us to the conclusion:
\begin{theorem}
NFC is optimal and polynomial to solve the VS-IF-P when Hypothesis~\eqref{StrongH} holds.
\label{theoremNFC}
\end{theorem}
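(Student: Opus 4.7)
The plan is to establish three claims: (i) NFC produces a feasible packing respecting the cut constraint $D$, (ii) the packing uses the minimum possible number of bins, and (iii) the running time is polynomial in the input size.

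First, for feasibility, the critical observation is that under Hypothesis~\eqref{StrongH}, every item satisfies $s_i \leq b_{\max}$, where $b_{\max} = \max_{b_i}(B)$. Consequently, when NFC splits an item to fit the remaining capacity $r$ of the current bin, the overflow fragment has size $s_i - r \leq s_i \leq b_{\max}$ and therefore fits entirely into a freshly opened bin of capacity $b_{\max}$. This guarantees that each item is cut at most once, which respects the constraint $D \geq 1$ (recall $D \in \mathbb{N^+}$), and the bin-capacity constraints are respected by the construction itself.

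Second, for optimality, let $S = \sum_{i=1}^n s_i$. Since no bin has capacity exceeding $b_{\max}$, any feasible packing must use at least $\lceil S / b_{\max} \rceil$ bins by a simple mass-counting argument. For the matching upper bound, I would inspect the NFC execution and observe that every bin it opens has capacity $b_{\max}$ and is filled to capacity by the overflow-and-continue mechanism, with the possible exception of the last bin. Hence NFC uses exactly $\lceil S / b_{\max} \rceil$ bins, matching the lower bound and thus proving optimality. For the polynomial bound, sorting the $m$ bin classes costs $\mathcal{O}(m \log m)$, verifying Hypothesis~\eqref{StrongH} requires one scan of $L$ and one of $B$, and the NFC loop processes each of the $n$ items in constant amortized time with at most one bin-transition per item, yielding the overall complexity $\mathcal{O}(m \log m + \max(n, m))$ announced in the paper.

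The main obstacle I anticipate is the single-cut property, since this is the step where Hypothesis~\eqref{StrongH} is used essentially: without it, the overflow could exceed the capacity of a freshly opened bin and require additional splits, potentially violating the bound $D$ or even producing an infeasible packing. Once this property is locked in, the matching lower and upper bounds on the bin count make optimality immediate, and the running-time analysis reduces to routine algorithmic accounting.
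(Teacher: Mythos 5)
Your proposal is correct and follows essentially the same route as the paper, whose ``proof'' is the short discussion preceding the theorem: the single-cut property from Hypothesis~\eqref{StrongH}, the observation that every opened bin has capacity $b_{\max}$ and is completely filled except possibly the last (which is exactly your matching $\lceil S/b_{\max}\rceil$ volume bound, left implicit in the paper), and the $\mathcal{O}(m\log m + \max(n,m))$ running time. Your write-up merely makes the lower-bound half of the optimality argument explicit, which is a welcome precision rather than a different approach.
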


   \subsection{Off-line Approach to the Main Problem with Monotone Cost \label{greedyApproach}}

The approach presented here is based upon IFFD (Paragraph~\ref{IFFD}) combined with item cutting. For this reason we refer to it as CIFFD. 
As before, we assume that bin classes are sorted in decreasing capacity order, $b_1>b_2> \cdots b_m$. 

The initial idea of our algorithm to solve the VS-CIF-P (Algorithm~\ref{greedyForVSLCIFP}) consists in dividing items of $L$ into two categories: those items $i$ which may be possibly packed without fragmentation and those which undoubtedly may not. Such an approach makes our algorithm off-line. We propose to reason here upon item sizes $s_i$, not upon their indices $i$. This mental operation enables us to avoid tedious renumbering of items to be packed, which might deteriorate the text limpidity. At the same time, it does not introduce any ambiguity.

 We formally note  the two categories as $T_1$ and $T^+_1$, respectively: $T_1 \cup T^+_1 = \{s_1,s_2, \ldots, s_n\}$, for all $s_i\in T_1$ we have $s_i\leq \max_{b_i}(B)=b_1$, and for all $s_i\in T^+_1$ we have $ s_i > \max_{b_i}(B)=b_1$. Items from $T^+_1$ are cut naturally up to $D$ times in order to completely fill up a bin of capacity $b_1$, the remaining fragment whose size is inferior to $b_1$ is stored in $T^-_1$ (lines~\ref{feasibilityAlgoOFFLoopBegin}--\ref{feasibilityAlgoOFFLoopEnd} of Algorithm~\ref{greedyForVSLCIFP}). This loop also allows us to detect the instance infeasibility, i.e. the number of cuts allowed $D$ is too small to insert an item fragmented into the largest bins.

The items whose sizes are in $T_1 \cup T^-_1$ are then packed according an appropriate algorithm to solve the BPP as we use momentarily bins of identical capacity $b_1$. We have thus a solution which we try to improve iteratively, taking bins in decreasing order of their capacity (for a bin of capacity $b_j$ items are divided into $T_j$ and $T_j^+$), by consecutive repacking of the contents of the less efficiently used bin into a smaller empty one, if possible (as IFFD described in Paragraph~\ref{IFFD} does). As our algorithm can fragment an item to fill up a bin, its iterative descent may stop when the number of cuts allowed has been reached. A solution which offers the lowest cost among the obtained ones is returned. Finally, an attempt is made to squeeze this solution more (lines \ref{repackingAlgoOFFLoopBegin}--\ref{repackingAlgoOFFLoopEnd} of Algorithm~\ref{greedyForVSLCIFP}).   
  
\begin{algorithm}[!h]
\KwData{VS-CIF-P data as in Def.~\ref{defVSLCIFP} with $B$ sorted in decreasing order}
\KwResult{$B_{\min} = (B_1, B_2, \ldots, B_k)$ whose cost is as minimal as possible; cost $e_{\min}$ of this packing}
$e\leftarrow 0$;
divide $L$ into  $T_1$ and $T^+_1$\;
\ForEach{$t$ in $T^+_1$}{ \nllabel{feasibilityAlgoOFFLoopBegin}
\Repeat{$t - b_1 \leq b_1$}{
split $t$ into a fragment $b_1$ and the remainder $t - b_1$\;
pack the fragment $b_1$ into a bin of capacity $b_1$\;
$e\leftarrow e +$ cost of filling up a bin $b_1$
} 
put the remainder $t - b_1$ into $T^-_1$ 
}\nllabel{feasibilityAlgoOFFLoopEnd}
pack items from $ T_1 \cup T^-_1$ to bins $b_1$ with any BPP algorithm\;
$e\leftarrow e +$ cost of this packing ;
$e_{\min} \leftarrow e$; $B_{\min} \leftarrow $ a current packing\; 
\For{$j\leftarrow 2$ \KwTo $m-1$}{
take out items from the less filled bin of size $b_{j-1}$ of packing $B_{\min}$\;
divide them into two categories $T_j$ and $T^+_j$\;
   \ForEach{$t$ in $T^+_j$ which has not yet reached the limit of cuts $D$}{
   split $t$ into 
   $b_j$ and  $t - b_j$;
   pack fragment $b_j$ into a bin $b_j$\; 
   $e\leftarrow e +$ cost of filling up a bin $b_j$;
   put 
   $t - b_j$ into $T^-_j$
   }
pack items from $ T_j \cup T^-_j$ to bins $b_j$ with any BPP algorithm\;
$e\leftarrow e +$ cost of this packing\;
\eIf{$e<e_{\min}$}{
$e_{\min} \leftarrow e$ ; $B_{\min} \leftarrow $ a current packing 
}{
take $B_{\min}$ for repacking
}
}
\ForEach{$B_j$ of $B_{\min}$ taken in decreasing order} 
{ \nllabel{repackingAlgoOFFLoopBegin} 
   \If{$B_j$ is not full $\wedge$ its content may enter into bins of certain capacities $b_l$}
   { 
     find $b_{l_{\min}}$, the smallest of these $b_l$\;
     repack the contents of $B_j$ into an empty bin of capacity $b_{l_{\min}}$ 
   } 
} \nllabel{repackingAlgoOFFLoopEnd} 
\caption{CIFFD solving the VS-CIF-P}
\label{greedyForVSLCIFP}
\end{algorithm}

For any item CIFFD looks for an appropriate opened bin. If it does not find one, it will open up the smallest bin into which the item enters. Its complexity is thus $\mathcal{O}(mn\log n)$.

\begin{theorem}
The VS-CIF-P is $2$-approximable with CIFFD. 
\label{CIFFD-approxTheorem}
\end{theorem}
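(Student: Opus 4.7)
The plan is to prove $\mathrm{ALG}\leq 2\,\mathrm{OPT}$ via a volume-weighted argument that lifts Theorem~\ref{theoremApproxCFF} from bin-count to cost accounting, while exploiting the iterative-descent structure of CIFFD. I would first record the canonical lower bound on OPT: writing $S=\sum_{i=1}^{n}s_{i}$ for the total mass of $L$ and invoking monotonicity (every unit of packed mass is paid for at a rate at least $c_{1}/b_{1}$), any feasible packing satisfies $\mathrm{OPT}\geq S\cdot c_{1}/b_{1}$. Whenever $T_{1}^{+}$ is non-empty, the stronger bound $\mathrm{OPT}\geq c_{1}\lfloor s_{i}/b_{1}\rfloor$ holds for each $i\in T_{1}^{+}$, because such an item forces OPT itself to carve and pack at least $\lfloor s_{i}/b_{1}\rfloor$ bins of capacity $b_{1}$.

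Next I would decompose CIFFD's cost into the contribution of the feasibility loop (lines~\ref{feasibilityAlgoOFFLoopBegin}--\ref{feasibilityAlgoOFFLoopEnd}) and the contribution of the bins holding items of $T_{1}\cup T_{1}^{-}$. Each bin closed during the feasibility loop is filled exactly to capacity $b_{1}$, so its cost $c_{1}$ saturates the OPT lower bound and already sits within ratio one. For the remaining bins, since the main loop initialises the best-so-far packing $B_{\min}$ from the all-$b_{1}$ BPP baseline and only accepts a switch to a smaller class $b_{j}$ when this strictly reduces cost, the returned cost is at most that baseline's cost. With fragmentation, the baseline uses $\lceil S'/b_{1}\rceil$ bins of class $b_{1}$ for the residual mass $S'$, whose cost is at most $S'\cdot c_{1}/b_{1}+c_{1}\leq\mathrm{OPT}+c_{1}$. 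Combining with $c_{1}\leq\mathrm{OPT}$ (which follows from either the stronger lower bound above or from the fact that any non-trivial instance forces at least one bin to be opened) yields the target $\mathrm{ALG}\leq 2\,\mathrm{OPT}$.

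The delicate point will be the case in which OPT uses no $b_{1}$-bin, so that the single-class baseline can far exceed $\mathrm{OPT}$. The descent loop of CIFFD is precisely designed to defuse this: starting from that baseline, it repeatedly migrates the least-filled bin into a smaller class and retains the cheapest packing seen. I would verify inductively on the bin-class index $j$ that this descent, together with the final repacking loop (lines~\ref{repackingAlgoOFFLoopBegin}--\ref{repackingAlgoOFFLoopEnd}), cannot miss a cost-reducing transition: whenever the current packing contains a bin of class $b_{i}$ that is less than half full, the monotone inequality $c_{j}b_{i}\geq c_{i}b_{j}$ for $b_{i}>b_{j}$ shows that moving its content into the first opened bin of a smaller class able to absorb it---possibly after one further cut, since the $D$-budget remains available at this stage---strictly decreases total cost.

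The main obstacle will be showing that the discrete nature of the bin-class set $B$ does not force a residual under-utilised bin whose absolute cost exceeds its fair share by more than a factor of two. A careful case analysis is required when the content of a near-empty bin of class $b_{i}$ falls in a ``gap'' of $B$, so that no intermediate class exists to shrink into; here the monotone cost assumption must be used to show that the current assignment was already locally optimal and its over-payment is bounded by the $c_{1}$ additive slack absorbed by the second OPT lower bound. Extending the Kang--Park IFFD analysis~\cite{KanPar03} to account for fragmentation is what makes this step go through and is where the write-up will need the most care.
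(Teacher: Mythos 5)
Your overall strategy---a volume lower bound on $\mathrm{OPT}$ via the monotone rate $c_1/b_1$, combined with the half-full guarantee inherited from the cut-augmented FFD baseline and the observation that CIFFD's descent and repacking phases can only lower the cost---is the same route the paper takes; its proof is essentially a two-line appeal to Theorem~\ref{theoremApproxCFF} plus monotonicity. However, your accounting contains two genuine problems. First, you bound the all-$b_1$ baseline by $\lceil S'/b_1\rceil$ bins, i.e.\ cost at most $S'c_1/b_1+c_1\leq \mathrm{OPT}+c_1$, on the grounds that fragmentation lets the algorithm fill each $b_1$-bin exactly. But in Algorithm~\ref{greedyForVSLCIFP} the residual items of $T_1\cup T_1^-$ are packed ``with any BPP algorithm,'' \emph{without} further cuts; only the feasibility loop fragments. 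The correct guarantee for that step is the FFD one (all bins but one more than half full), which gives cost at most $2S'c_1/b_1+c_1$. That factor of $2$ is exactly what the paper imports from Theorem~\ref{theoremApproxCFF}; with your arithmetic corrected, your combination ``baseline $+$ ($c_1\leq\mathrm{OPT}$)'' yields $3\,\mathrm{OPT}$, not $2\,\mathrm{OPT}$, so the additive slack must instead be absorbed into the multiplicative $2$ (e.g.\ by charging the one under-filled bin against the mass it does contain, or by noting that when $T_1^+\neq\emptyset$ the full feasibility-loop bins are paid at ratio $1$ and create room in the budget).

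Second, your justification that $c_1\leq\mathrm{OPT}$ ``because any non-trivial instance forces at least one bin to be opened'' is false: $\mathrm{OPT}$ may open only a small bin of cost $c_m\ll c_1$. You recognize this yourself in the ``delicate point'' paragraph, but you then leave the resolution---showing that the descent loop and the final repacking of lines~\ref{repackingAlgoOFFLoopBegin}--\ref{repackingAlgoOFFLoopEnd} never strand an under-utilised expensive bin whose class sits in a gap of $B$---as an explicitly unproven ``main obstacle.'' As written, the proposal is therefore a proof plan with an acknowledged open step rather than a proof; the case $T_1^+=\emptyset$ with $\mathrm{OPT}<c_1$ is precisely where your argument currently fails, and it is also the case the paper's own terse proof glosses over, so closing it requires the repacking analysis you defer, not merely the lower bounds you state.
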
   
\begin{proof}
CIFFD is based upon the consecutive executions of FFD and possibly improving their result due to successful repacking. Taking advantage of Theorem~\ref{theoremApproxCFF} and the fact that the cost is monotone (i.e. a cost of packing is not less than the sum of items to be packed) we obtain also $2$-approximation for CIFFD.
\end{proof}

   \subsection{On-line Approach to the Main Problems with Linear Cost \label{algorithmWithPerformanceBounds}}
The algorithmic on-line method we propose now is founded upon FFf (see Paragraph~\ref{FFf}) with item cuts incorporated (CFFf).
As in Subsection~\ref{greedyApproach} and for the same reason, we operate on item sizes, not on item indices. To keep the notation brief, we put $b_{\max} = \max_{b_i}(B)$.

In a nutshell, the CFFf idea is as follows. For items whose sizes are smaller than the largest bin capacity Hypothesis~\ref{StrongH} holds. These items, which form set $A$, can be therefore packed optimally with NFC (Subsection~\ref{NFC}). Other items, which constitute set $A^+=L-A$, require a split before packing.
For any element $t$ of $A$ we perform a cut into $b_{\max}$ and $t-b_{\max}$ fragments. The remainders $t-b_{\max}$ form set $A^-$ and they are packed according to FFf with $f$ indicating a fill factor (Paragraph~\ref{FFf}).

We believe that this explanation is sufficient to implement the algorithm.
We propose, however, in Algorithm~\ref{approxForVSLCIFP}, a more detailed description which shows explicitly a classification of items from set $A^-$ (i.e. items which are the remainders of cuts) into categories which are induced by different manners of item packing.
These three categories of items from set $A^-$, which we enumerate and comment on below, play an important role in the approximability proof:
\begin{itemize}
\item $X$ --- items packed individually into bins of capacity $b_{\max}$,
\item $Y$ --- items packed into bins of capacity $b_{\max}$ sharing them with other items,
\item $Z$ --- items packed into bins of any capacity $b$, $b<b_{\max}$.
\end{itemize}

\begin{algorithm}[!h] 
\KwData{VS-CIF-P data as in Def.~\ref{defVSLCIFP}, a fill factor $f$}
\KwResult{a packing whose cost is as minimal as possible}

\ForEach{$t$ in $L$}{ 

\eIf(\tcc*[f]{$t$ from $A$}){$t\leq b_{\max}$}
{ pack $t$ into bins of capacity $b_{\max}$ with NFC\;\nllabel{NFCinCFFf}
continue
}(\tcc*[f]{$t$ from $A^+$})
{\Repeat{$t\leq b_{\max}$}{split $t$ into a fragment $b_{\max}$ and the remainder $t - b_{\max}$\;\nllabel{beginRepeatCFFf}
pack the fragment $b_{\max}$ into a bin $b_{\max}$\;
$t \leftarrow t - b_{\max}$
} \nllabel{endRepeatCFFf} 
}
\tcc*[h]{$t$ is from $A^-$}

\If{there is room for $t$ in an opened bin}{\nllabel{beginIntelligentPackCFFf} 
                     $b \leftarrow $ the first opened bin into which $t$ can be packed\;          
 		      pack $t$ into b
 		\tcc*[f]{$t\in Y$ when the bin $b_{\max}$ is used and $t\in Z$ otherwise}
		}
\tcc*[h]{an empty bin has to be opened for $t$}

\eIf(\tcc*[f]{$t$ is small}){$t\leq 0.5\cdot b_{\max}$}{ 
		{
  		pack $t$ into an empty bin of capacity $b_{\max}$ \tcc*[f]{$t\in Y$}
 		} 
 }(\tcc*[f]{$t$ is big})
 {
 	  \eIf{there are bins of capacity between $t$ and $\frac{t}{f}$}
	      {
		$b \leftarrow $ the smallest empty bin of capacity between $t$ and $\frac{t}{f}$ 
            }
             {
               $b \leftarrow$  a bin of capacity $b_{\max}$
             }
          pack $t$ into bin $b$
          \tcc*[f]{$t\in Z$ if $b<b_{\max}$ and $t\in X$ or  $t\in Y$ if $b=b_{\max}$}
       }\nllabel{endIntelligentPackCFFf} 
} 
\caption{CFFf solving the VS-CIF-P}
\label{approxForVSLCIFP}
\end{algorithm}

The computational effort of CFFf is concentrated upon searching an appropriate bin among those which have been already opened and selecting an empty bin with respect to a given fill factor $f$. For $f=0.5$ the complexity of CFFf is $\mathcal{O}(n(\log n + \log m) +m\log m)$.

We estimate the quality of solution obtained with CFFf for an instance $I$ with list $L$ of items to be packed, $\mbox{CFFf}(L)$. We assume here that the cost is linear and, moreover, a bin cost is equal to its capacity, $b_i=c_i$, $i=1,2,\ldots, m$ as stated in Subsection~\ref{VSBPCP}.  Any solution cost is always less than or equal to the overall mass of items from $L$: $S_L=\sum_{t\in L}t$. The notation $S_C$ indicates later on a sum of item sizes from any set $C$.
\begin{theorem}
$\mbox{CFFf}(L)\leq \frac{4}{3}S_L + 2b_{\max} $.
\label{CFFf-approxTheorem}
\end{theorem}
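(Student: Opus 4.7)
The plan is to bound $\text{CFFf}(L)$ by classifying the bins produced by the algorithm according to the five cases highlighted in Algorithm~\ref{approxForVSLCIFP} and controlling each class separately. I partition the bins used into: (a) $b_{\max}$-bins that hold only complete $b_{\max}$-sized chunks cut off from items of $A^+$ by the loop at lines~\ref{beginRepeatCFFf}--\ref{endRepeatCFFf}; (b) $b_{\max}$-bins used by NFC (line~\ref{NFCinCFFf}) to pack items of $A$; (c) bins of capacity strictly less than $b_{\max}$ that receive items of $Z$; (d) $b_{\max}$-bins shared by items of $Y$ (possibly mixed with a previously inserted $X$-item); (e) $b_{\max}$-bins holding a single item of $X$. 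The total cost $\text{CFFf}(L)$ is the sum of the capacities of all these bins.

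Three of the five bounds are relatively direct. Every bin in group (a) is completely full, so its contribution to the cost equals the mass it contains, namely $S_{A^+}-S_{A^-}$. By Theorem~\ref{theoremNFC}, NFC fills each bin of group (b) completely except possibly the last one, so the cost of group (b) is at most $S_A+b_{\max}$. For group (c), each $Z$-bin of capacity $b<b_{\max}$ receives an item of size $t$ with $b\leq t/f$ by the FFf selection rule, hence $t\geq f\cdot b$; summed over the $Z$-bins, this gives a contribution of at most $S_Z/f$.

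The heart of the proof is the treatment of groups (d) and (e), both consisting of $b_{\max}$-bins. Here I would invoke a First-Fit style pairing argument: two consecutive such bins cannot simultaneously be less than half full, for otherwise the later one's content would have fit into the earlier one, contradicting how these bins are opened. Moreover, every $X$-bin already contains an item of size strictly greater than $b_{\max}/2$, so it is itself more than half full. Pairing consecutive bins of groups (d) and (e) therefore shows that each pair is filled to more than $b_{\max}$, yielding a cost of at most $2(S_X+S_Y)+b_{\max}$ for groups (d) and (e) combined. Summing the five contributions, using $S_L=S_A+S_{A^+}$ and $S_{A^-}=S_X+S_Y+S_Z$, and balancing the $S_Z/f$ term against $2(S_X+S_Y)$ by choosing $f$ appropriately, the coefficient in front of $S_L$ collapses to $\tfrac{4}{3}$ and the additive slack collects into $2b_{\max}$.

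The main obstacle I expect is precisely this last accounting step for groups (d) and (e): obtaining the tight $\tfrac{4}{3}$ rather than the loose $2$ requires carefully tracking when an $X$-bin is opened (only when no bin class of capacity in $[t,t/f]$ is available at that moment) and how small items inserted via the FFL subrule interact with big items placed by the FFS subrule in the same pool of $b_{\max}$-bins. The delicate point is to relate the aggregated waste in $X$-, $Y$- and $Z$-bins not merely to $S_{A^-}$ (which could be as large as about $S_L/2$) but to the overall mass $S_L$, so that the announced constant $\tfrac{4}{3}$ indeed emerges.
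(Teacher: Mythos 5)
Your decomposition is essentially the paper's (the NFC part for $A$, the full $b_{\max}$ chunks, and the $X$/$Y$/$Z$ classification of the remainders), but the proof as written has a genuine gap exactly where you flag one, and the missing ingredient is identifiable: you must \emph{amortize} the cost of each remainder's bin against the full $b_{\max}$ chunk(s) cut from the \emph{same} original item, rather than treating the full chunks as a separate group of ratio~$1$ and bounding the remainder bins only against the remainder mass. Concretely, your five group bounds sum to
$(S_{A^+}-S_{A^-}) + (S_A+b_{\max}) + \tfrac{1}{f}S_Z + 2(S_X+S_Y) + b_{\max} = S_L + (S_X+S_Y) + (\tfrac{1}{f}-1)S_Z + 2b_{\max}$,
which with $f=\tfrac12$ is $S_L + S_{A^-} + 2b_{\max}$. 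Since a remainder $t-kb_{\max}$ can be nearly as large as $b_{\max}$ while $k=1$, one can have $S_{A^-}$ close to $S_{A^+}/2$, hence close to $S_L/2$; your accounting then only yields roughly $\tfrac{3}{2}S_L + 2b_{\max}$, not $\tfrac{4}{3}S_L + 2b_{\max}$. No choice of $f$ ``balances'' this away, because the $2(S_X+S_Y)$ term already exceeds the budget on its own.

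The paper closes this by charging, to each item of $X$, $Y$, or $Z$, both its own bin and one full bin of capacity $b_{\max}$ produced by the cutting loop. For $X$ this gives cost $2|X|b_{\max}$ against mass at least $(1+\tfrac12)|X|b_{\max}$ (the remainder placed alone is big, i.e.\ exceeds $\tfrac12 b_{\max}$), hence the local ratio $\tfrac43$; for $Y$ the shared bin plus the associated full chunks give ratio $\tfrac98$; for $Z$ the fill-factor rule gives $\tfrac{3}{2+f}$, and all three are at most $\tfrac43$ at $f=\tfrac12$. Your First-Fit pairing argument for groups (d) and (e) is not wrong, but it is the wrong tool here: it can never do better than a factor $2$ on $S_X+S_Y$ in isolation, whereas the per-item charging of the guaranteed-full companion bin is what converts that local $2$ into the global $\tfrac43$. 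You correctly diagnosed that the waste must be related to $S_L$ rather than to $S_{A^-}$, but the proposal does not supply the charging scheme that accomplishes this, so the claimed constant is not established.
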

\begin{proof}
 As NFC, which packs items from $A$ is exact and polynomial (Theorem~\ref{theoremNFC}), $\mbox{CFFf}(A)\leq S_A + b_{\max}$.
We have to estimate the packing quality for items from $A^-$ which are divided into three categories: $X$, $Y$, and $Z$ (see Algorithm~\ref{approxForVSLCIFP}).

Obviously, as items of category $X$ result from splitting and they occupy bins singly, $\mbox{CFFf}(X) = 2|X|b_{\max}$ and $S_X\geq 1.5 |X|b_{\max}$ with exception to at most a single bin, which gives:
\begin{equation}
\mbox{CFFf}(X) \leq \frac{4}{3}S_X + b_{\max}.
\label{approxX}
\end{equation}
Let $Y_B$ stand for these items of Y which are packed into bin B. Analogously, $\mbox{CFFf}(Y_B) = (|Y_B|+1)b_{\max}$ and $S_Y \geq (|Y_B|+\frac{2}{3})b_{\max}$ with exception to at most a single bin. This leads to:
\begin{equation}
\mbox{CFFf}(Y) \leq \frac{9}{8}S_Y + b_{\max}.
\label{approxY}
\end{equation}
For a bin of capacity $b$ in which items $Z_b$ of $Z$, $Z_b\subset Z$, are packed we have $\mbox{CFFf}(Z_b) = |Z_b| b_{\max} + b$ and $S_{Z_b} > b_{\max} +fb$. Consequently,
\begin{equation}
\mbox{CFFf}(Z) \leq \frac{3}{2+f}S_Z.
\label{approxZ}
\end{equation}

Combining the inequalities~\eqref{approxX}--\eqref{approxZ} with $f=0.5$ we get
\[\mbox{CFFf}(A^-) \leq \frac{4}{3}S_{A^-} + 2b_{\max}\]
which proves the theorem as $ \mbox{CFFf}(L) = \mbox{CFFf}(A) + \mbox{CFFf}(A^-)$ as the number of completely filled bins $b_{\max}$ has already been counted. 
\end{proof}

\section{Performance Evaluation \label{numericalResults}}

The goal of the performance analysis is to estimate the difference of results obtained with our approximation algorithms relative to the exact solutions. Moreover, we oppose our methods to two simple reference
algorithms, less ``intelligent'' and less costly in terms of computational effort. We also analyze the impact of the number of bin classes available and the number of cuts allowed $D$ on the approximation ratio of the obtained results.

\subsection{Experimental Setup \label{experimentalSetup}}

In order to estimate algorithm's approximation ratios we created VS-CIF-P instances from exact solutions artificially made. We also proceeded with the comparaison of algorithms' results for instances whose exact solutions are unknown. 

All instances treated are feasible, i.e. the largest item fragmented at most $D$ times can be inserted into bins of the greatest capacity. The numerical experiments were conducted for instances with few  ($m=3$) and many bin classes ($m=10$). We arbitrarily fixed the largest capacity $b_{\max}$ to $100$. The capacities of other $m-1$ classes are chosen uniformly in the natural interval $[1,b_{\max}-1]$.

In the case of a linear cost we took a bin cost equal to its capacity, $c_i = b_i$, $i=1\ldots, m$. When dealing with a monotone cost we assume that the largest bin cost is also equal to its capacity, $c_{\max} = b_{\max}$. Assuming that bin classes are sorted in decreasing order of their size, the cost $c_{i+1}$ is chosen uniformly in $[b_{i+1},c_i -1]$. 

Initial items are generated uniformly in $[1,99]$ and their average size is $50$. The number of initial items is arbitrarily fixed to $200$. We have thus the expected total volume of $10'000$ to be packed.

The construction of exact solutions consists in putting initial items into bins with FF and filling the opened bins entirely with extra items. Items to feed the algorithms which admit fragmentation are made up of initial and extra items put together: up to $D+1$ items (initial or extra) can be glued to form an individual item. Consequently, in the experiments we made, for $D=1$ the average item size is equal to $100$, for $D=2$ the average item size is equal to $200$, etc.

The exact solutions which we took as a base of the instance creation are composed typically of numerous large bins and a single little one. Not willing to restrict ourselves to such a solution form we also realized the direct confrontation of results obtained from initial items, possibly glued up to $D$ times, as explained above. We did not add, however, any extra items as we did not fill up the opened bins. By giving up fixing an exact solution as a starting point of the instance construction we award instances with more flexibility.   

Despite the fact that the approximation factor we give for CFFs in Theorem~\ref{CFFf-approxTheorem} holds for the linear cost, we decided to run CFFf with the monotone cost, too. We argue that at this stage we can evaluate empirically its performance with the monotone cost.

The results are averaged for series of $1000$ instances with which the algorithms are fed. The confidence intervals depicted in all figures illustrating the following subsection are computed with the confidence level $\alpha=0.05$. 

\subsection{Results \label{results}}

Before presenting the results we explain the simple greedy algorithms which will be brought face to face with our algorithms.

The first of them, called CNFL (this abbreviation is straightforward and will be explained below) is on-line. Its operating mode is two-fold. First, it cuts items up to $D$ times to fit their fragments into largest bins. This is a ``modulo $b_{\max}$ cut'': $D$ bins are filled up, the last one may be partially filled. Next, it packs them according to the Next Fit principle (Cut and Next Fit, CNF). The reader may refer to Subsection~\ref{lessConstrantedProblemDefinition} and Figure~\ref{exampleCFF} to recall the discussion of the similar CFF based upon ``almost equal'' cuts. Cost minimizing is obtained by always using the Largest bin (the dual principle to the one seen in Paragraph~\ref{FFf}). Assuming that bin classes are preliminarily sorted, the CNFL complexity is $\mathcal{O}(n)$. Its performance will be compared with that of CFFf.
\begin{figure}
  \begin{center}
    \includegraphics[width=0.7\columnwidth]{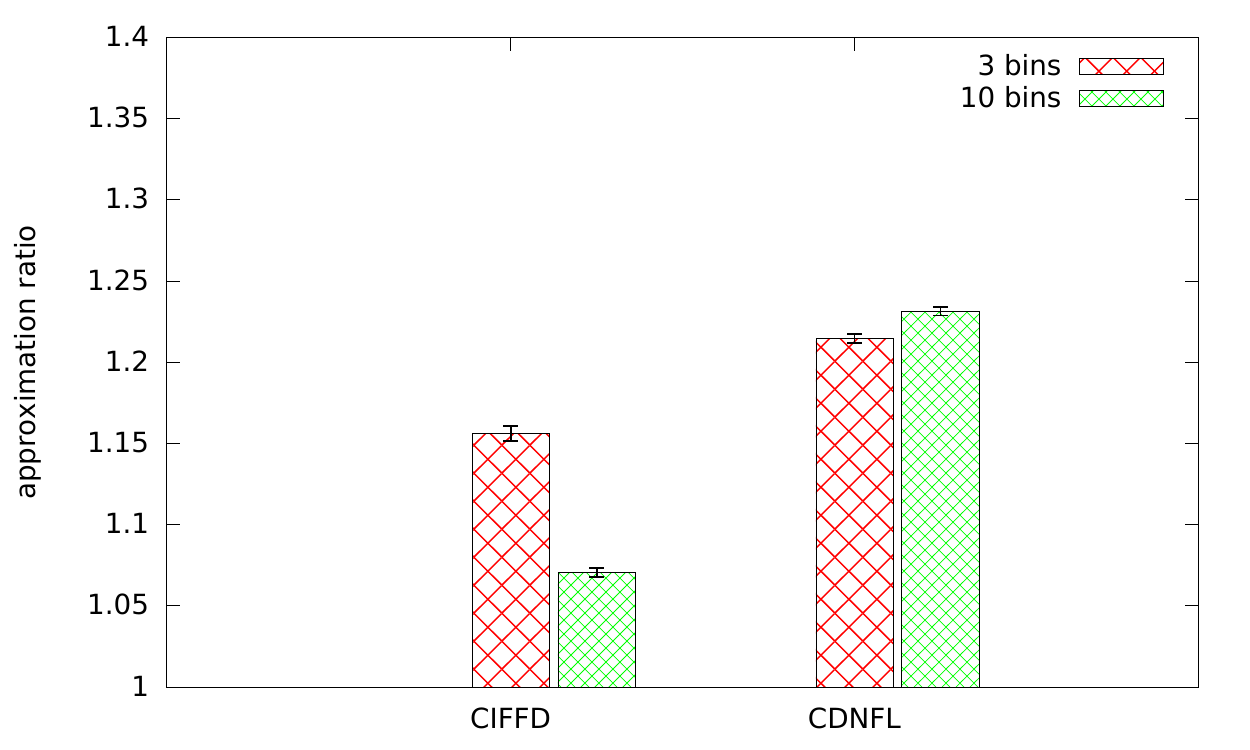}
  \end{center}
  \caption{\label{compOFFGloutonLinearFunctionBinClasses}Estimation of the approximation ratio for the off-line algorithms, CIFFD and CDNFL, with $D=1$ and linear cost for different numbers of bin classes}
\end{figure}
\begin{figure}[h]
  \begin{center}
    \includegraphics[width=0.7\columnwidth]{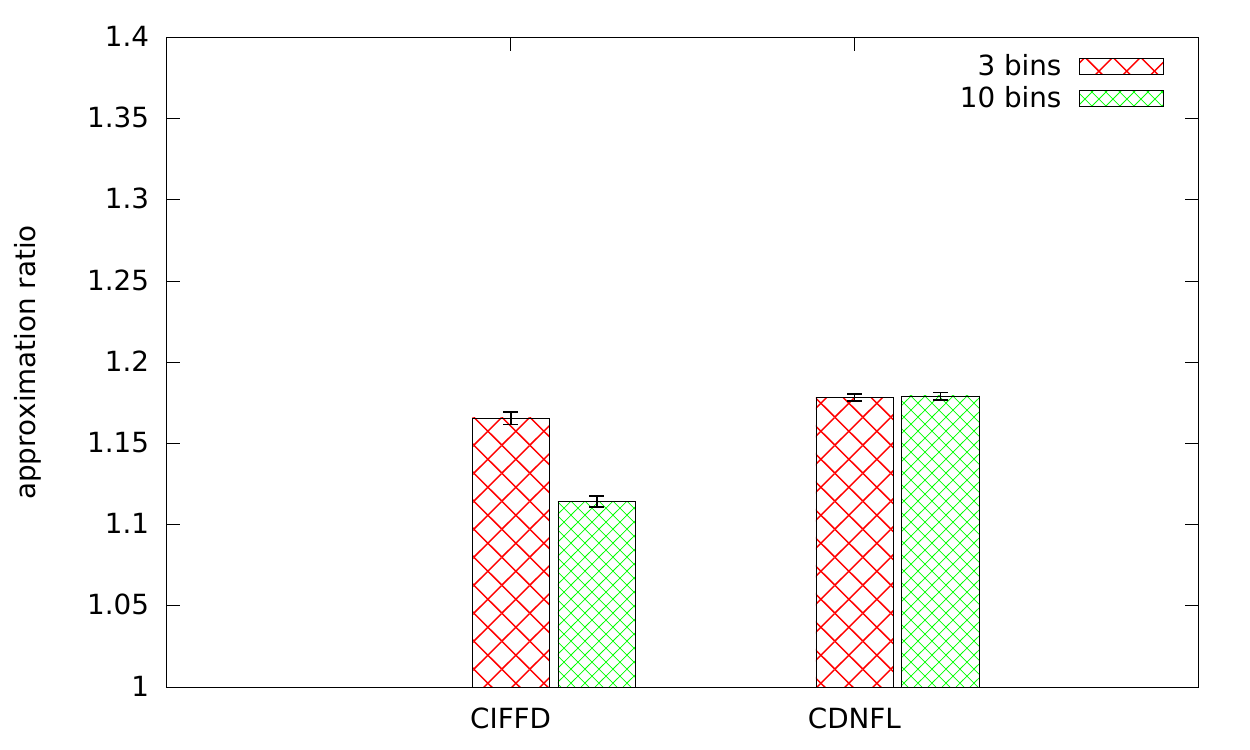}
  \end{center}
  \caption{\label{compOFFGloutonMonotoneFunctionBinClasses}Estimation of the approximation ratio for the off-line algorithms, CIFFD and CDNFL, with $D=1$ and monotone cost for different numbers of bin classes}
\end{figure}

The second one is an off-line mutation of CNFL in which the items, after the preliminary ``modulo $b_{\max}$ cut'' made as explained above, are sorted in decreasing order. This off-line algorithm, to be confronted with CIFFD, is obviously called CDNFL.

Intuitivelly, the great number of cuts allowed may facilitate packing procedures. We opted thus to confront the algorithms for  $D=1$.
Figures \ref{compOFFGloutonLinearFunctionBinClasses} and \ref{compOFFGloutonMonotoneFunctionBinClasses} present the results of the comparison of the approximation ratios obtained with two off-line algorithms in function of the number of bin classes for linear and monotone costs, respectively. Figures \ref{compONGloutonLinearFunctionBinClasses} and \ref{compONGloutonMonotoneFunctionBinClasses} do the same for both on-line methods.

Figures \ref{compOFFGloutonMonotoneFunctionBinClasses} and \ref{compONGloutonLinearFunctionBinClasses} show at a glance that the algorithms perform much better the theoretical performance bounds given in Theorems~\ref{CIFFD-approxTheorem} and \ref{CFFf-approxTheorem} for CIFFD with monotone cost and CFFf with linear cost, respectively. 

As one may expect, the approximation ratio obtained with CIFFD is significantly better comparing with the one produced by the naive approach for both the analyzed costs (Figures~\ref{compOFFGloutonLinearFunctionBinClasses} and \ref{compOFFGloutonMonotoneFunctionBinClasses}). As our algorithm is based upon consecutive repacking of a single, the least filled bin, the impact of the number of bin classes available is considerable. CIDDF packs better when having many bin classes at its disposal, in contrast to CDNFL which is insensitive to this parameter.  

The on-line approach, CFFf, is not significantly influenced by the number of bin classes. Figures~\ref{compOFFGloutonLinearFunctionBinClasses}--\ref{compONGloutonMonotoneFunctionBinClasses} put in evidence its strikingly good performance. CFFf, despite being on-line, outperforms even the off-line CIFFD method in certain situations. This interesting phenomenon will be explained below while studying the influence of the number of cuts allowed.
\begin{figure}
  \begin{center}
    \includegraphics[width=0.7\columnwidth]{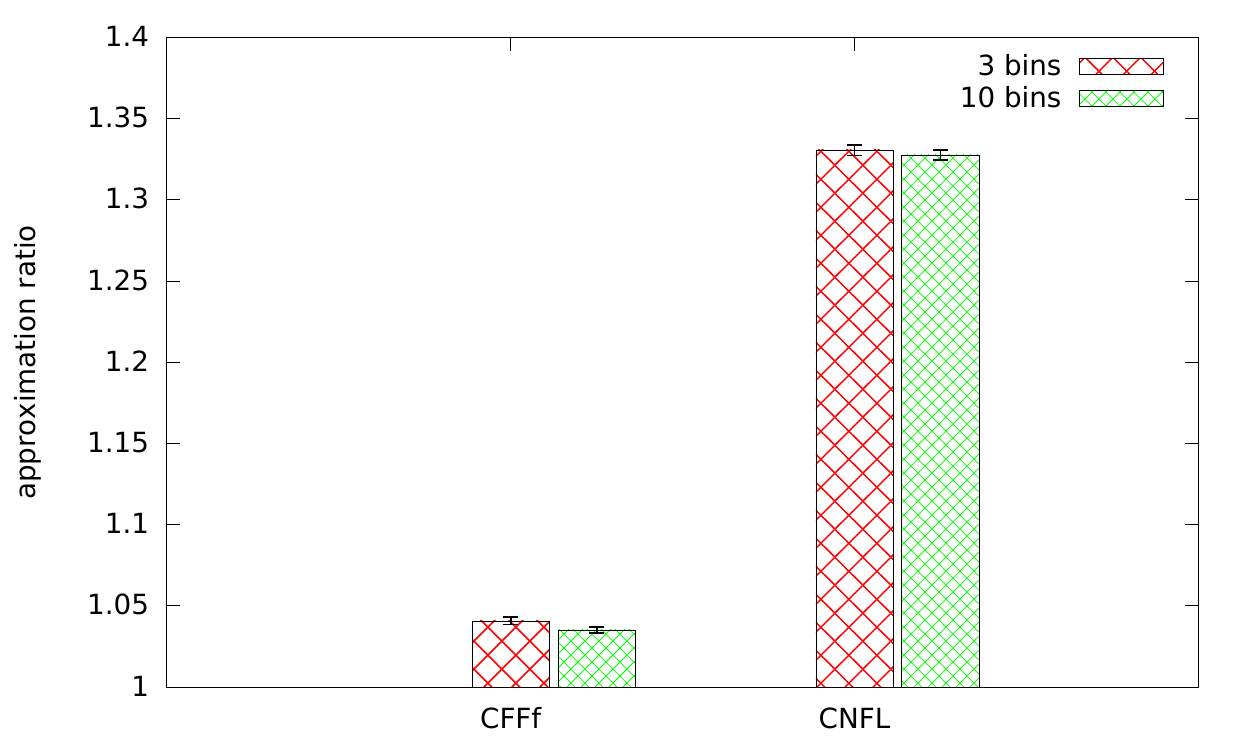}
  \end{center}
  \caption{\label{compONGloutonLinearFunctionBinClasses}Estimation of the approximation ratio for the on-line algorithms, CFFf and CNFL, with $D=1$ and linear cost for different numbers of bin classes}
\end{figure}
\begin{figure}
  \begin{center}
    \includegraphics[width=0.7\columnwidth]{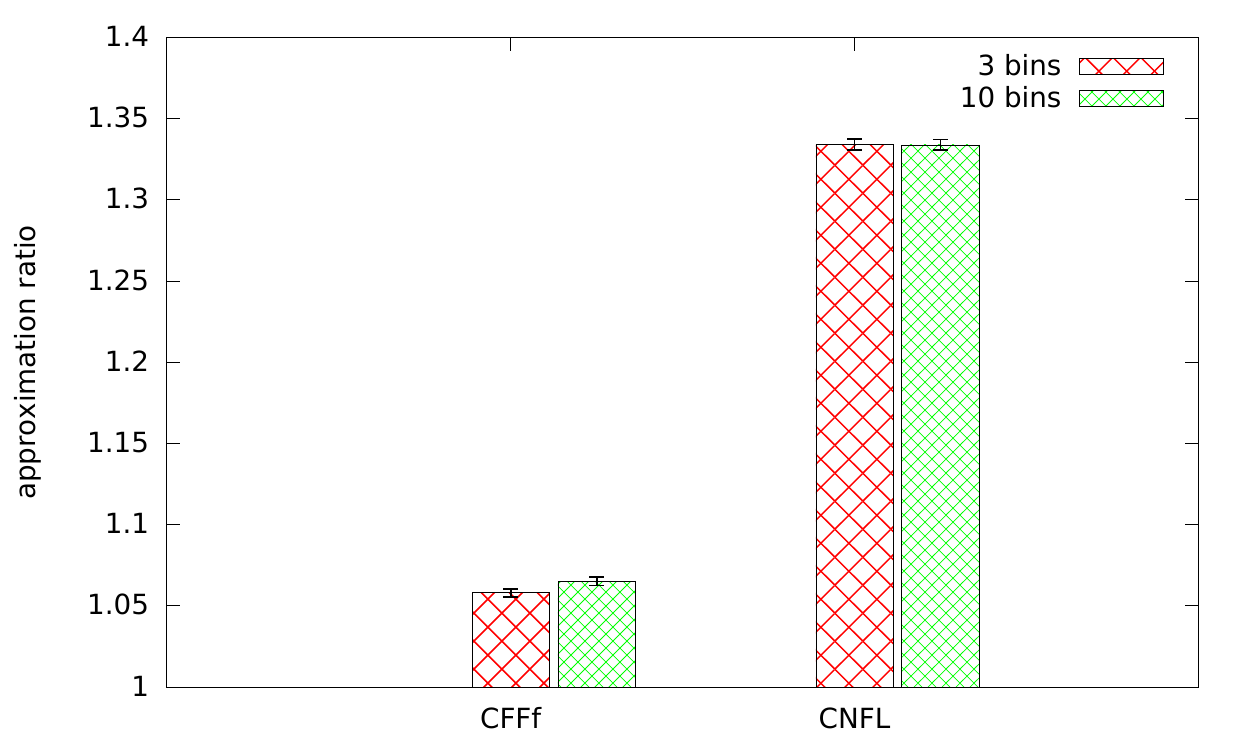}
  \end{center}
  \caption{\label{compONGloutonMonotoneFunctionBinClasses}Estimation of the approximation ratio for the on-line algorithms, CFFf and CNFL, with $D=1$ and monotone cost for different numbers of bin classes}
\end{figure}

Figures \ref{relativeLinearFunctionBinClasses} and \ref{relativeMonotoneFunctionBinClasses} show the performance of the algorithms for the same series of instances whose exact solutions are {\em a priori\/} unknown, with linear and monotone cost, respectively. This experience allowed us to compare the algorithm quality for instances which do not suffer from the imposed form of an exact solution. The smaller value of the average packing cost signifies a higher packing efficiency. As the average total volume of items to be packed is preserved and equal to $10'000$, the reader may observe the similar tendency as in the case of the comparison with optimal solutions. It is not astonishing that all algorithms behave better for the linear cost. CIFFD becomes more efficient when the number of bin classes goes up. Again, CFFf performs much better that a greedy off-line method CNFL.

The impact of the limit set on the number of splits permitted is illustrated in Figures \ref{impactD_OFF_monotoneForDifferentBinClasses} and \ref{impactD_ON_monotoneForDifferentBinClasses} for the off-line and on-line approaches, respectively. This analysis reveals a secret of the excellent performance of CFFf. As the results for the linear and monotone costs exhibit the same tendency, we restrict the graphical presentation to the latter only. The fragmentation ban ($D=0$) signifies that the problem solved is simply the VSBPCP.

The graph in Figure~\ref{impactD_OFF_monotoneForDifferentBinClasses} confirms the intuition that more splits allowed make packing easier. For instance, CIFFD with monotone cost, $10$ bin classes and up to $8$ cuts often reaches ``an almost exact solution''.

The behavior of CFFf depicted in Figure~\ref{impactD_ON_monotoneForDifferentBinClasses} does not, however, exhibit the same trend. Before explaining this phenomenon we recall to the reader that the items which satisfy Hypothesis~\eqref{StrongH} are packed optimally with NFC according to Theorem~\ref{theoremNFC}. We also call up that in our experiments the average total volume of items to be packed is preserved regardless the value of $D$ (see Subsection~\ref{experimentalSetup}). It means that for a great $D$ value an instance has less items but they are bigger. 

As we see in Figure~\ref{impactD_ON_monotoneForDifferentBinClasses}, admitting one cut ($D=1$) drastically lowers the solution cost comparing with the situation when the fragmentation is forbidden (the VSBPCP for $D=0$). In our experiments, the average item size for $D=1$ is $100$. The largest bin has the same capacity $b_{\max}=100$. A relatively large part of items is therefore inserted into largest bins by NFC (line~\ref{NFCinCFFf} in Algorithm~\ref{approxForVSLCIFP}). When more cuts are allowed, for example $D=2$, the average item size is greater, $200$, while the largest bin capacity stays unchanged, $b_{\max}=100$, the ``modulo $b_{\max}$ splitting'' made in the repeat loop (lines~\ref{beginRepeatCFFf}--\ref{endRepeatCFFf} of Algorithm~\ref{approxForVSLCIFP}) takes over. This loop may potentially open up too many largest bins than necessary. Finally, when $D$ is increasing (starting from $D=7$ in our experiments) the negative impact of the repeat loop is 
compensated by the intelligent packing realized in lines~\ref{beginIntelligentPackCFFf}--\ref{endIntelligentPackCFFf} of Algorithm~\ref{approxForVSLCIFP} and the CFFf performance stabilizes close to an exact solution (between five and ten per cent).

We thus draw a conclusion that the more items CFFf inserts with NFC, the better its performance is. 
\begin{figure}
  \begin{center}
    \includegraphics[width=0.7\columnwidth]{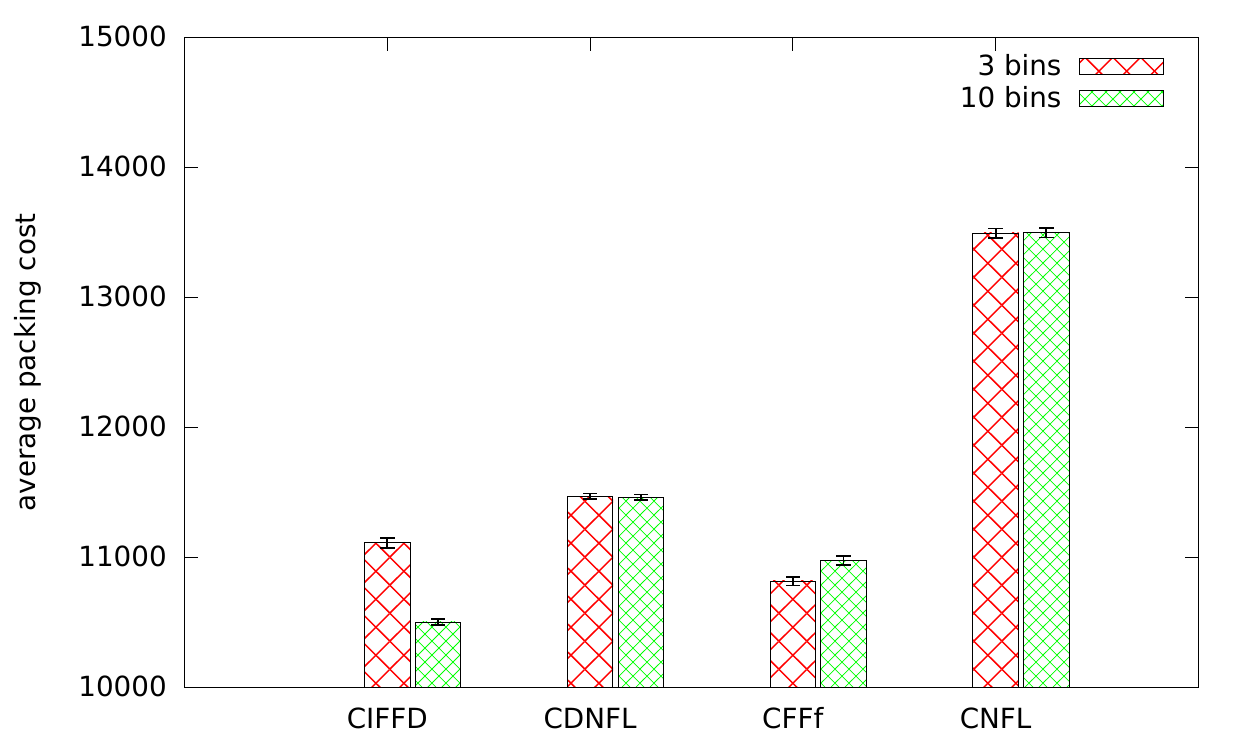}
  \end{center}
  \caption{\label{relativeLinearFunctionBinClasses}Average result of the off-line (CIFFD and CDNFL) and on-line algorithms (CFFf and CNFL) with $D=1$ and linear cost for different numbers of bin classes}
\end{figure}
\begin{figure}
  \begin{center}
    \includegraphics[width=0.7\columnwidth]{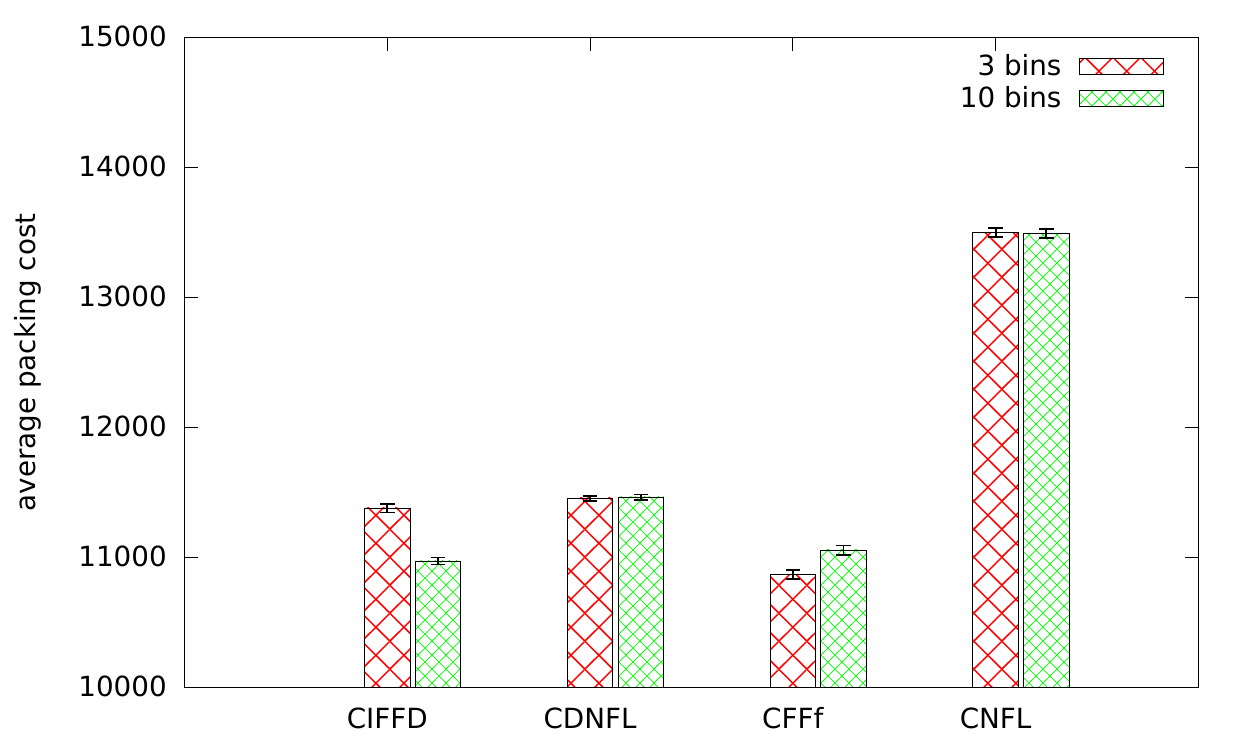}
  \end{center}
  \caption{\label{relativeMonotoneFunctionBinClasses}Average result of the off-line (CIFFD and CDNFL) and on-line algorithms (CFFf and CNFL) with $D=1$ and monotone cost for different numbers of bin classes}
\end{figure} 
\begin{figure}
  \begin{center}
    \includegraphics[width=0.7\columnwidth]{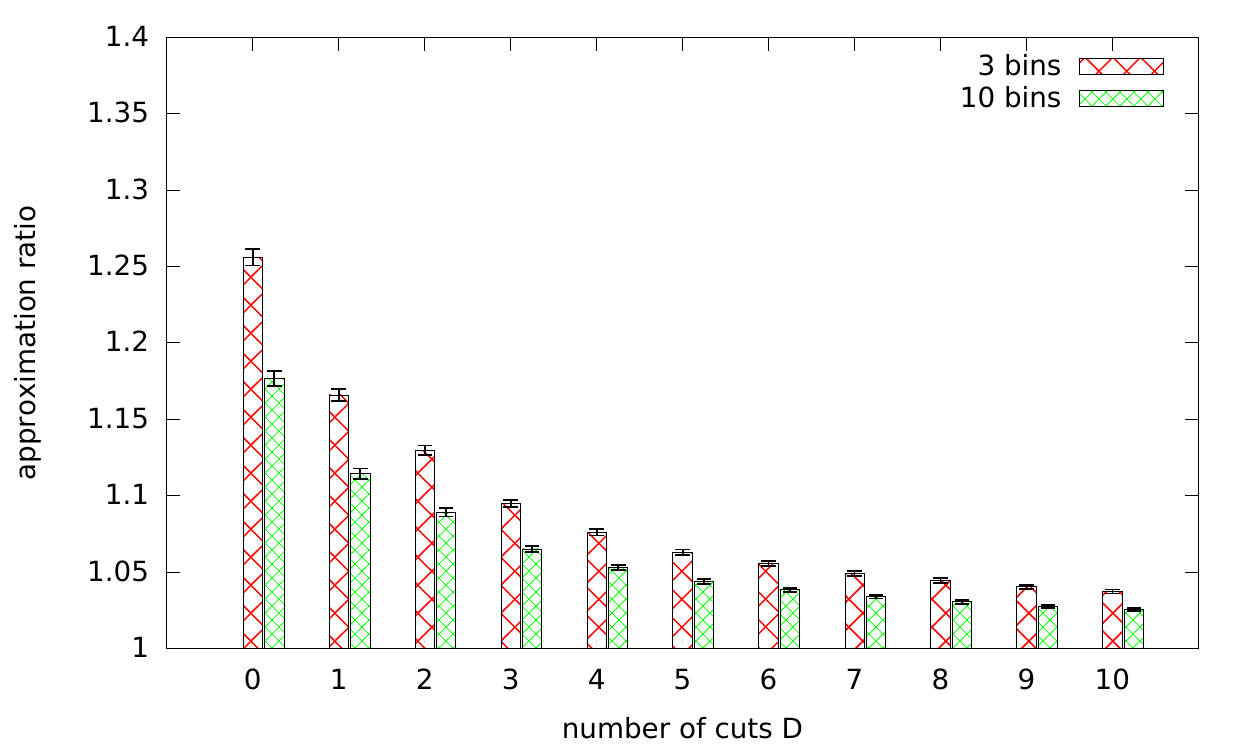}
  \end{center}
  \caption{\label{impactD_OFF_monotoneForDifferentBinClasses}Estimation of the approximation ratio for the off-line CIFFD with monotone cost for different numbers of bin classes available in function of the number of splits allowed}
\end{figure}
\begin{figure}
  \begin{center}
    \includegraphics[width=0.7\columnwidth]{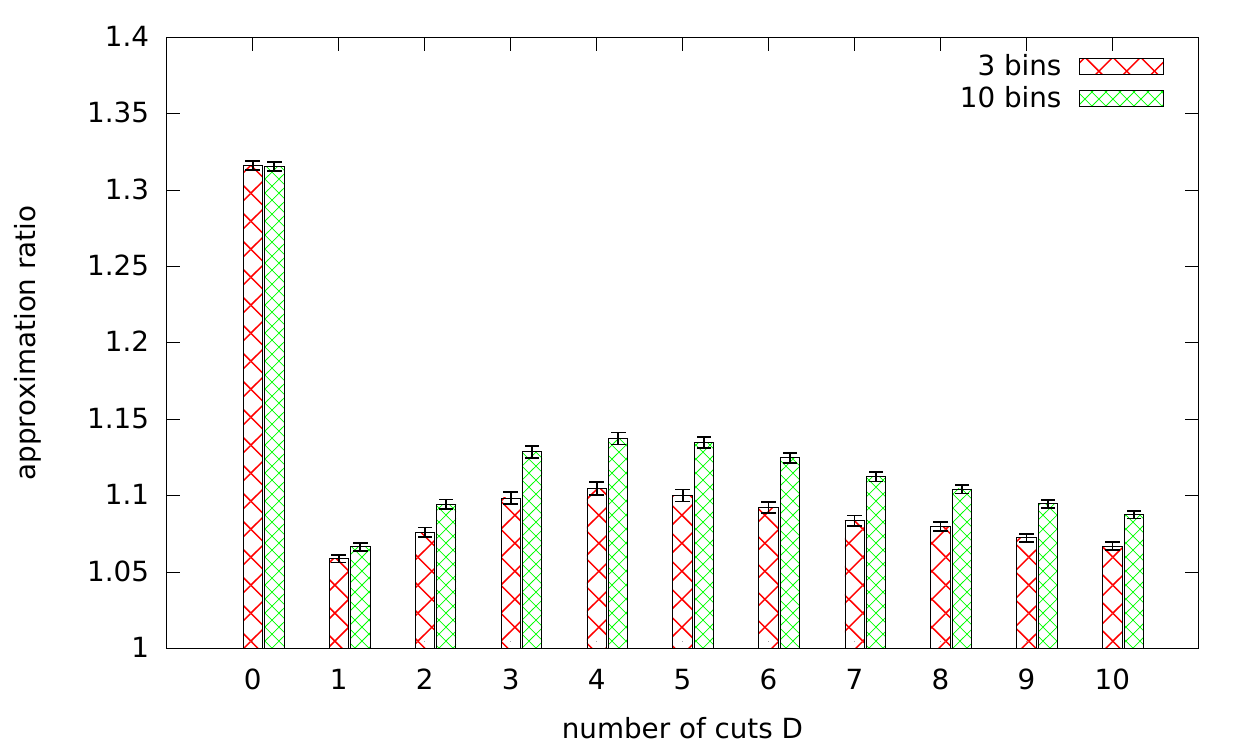}
  \end{center}
  \caption{\label{impactD_ON_monotoneForDifferentBinClasses}Estimation of the approximation ratio for the on-line CFFf with monotone cost for different numbers of bin classes available in function of the number of splits allowed}
\end{figure}

\section{Conclusions and Perspectives \label{conclusionsAndPerspectives}}
We proposed the modeling of the energy-aware load balancing of computing servers in networks providing virtual services by a generalization of the Bin Packing problem. As a member of the Bin Packing family, our problem is also approximable with a constant factor. In addition to its theoretical analysis we proposed two algorithms, one off-line and another one on-line, giving their theoretical performance bounds. The empirical performance evaluation we realized showed that the results they provide are significantly below the approximation factor. 

On the one hand, this practical observation encourages us to continue looking for a better approximation factor, especially for CIFFD whose performance is much better than predicted theoretically. We will also try to extend the analysis of CFFf to the monotone cost.

On the other hand, we gave the performance evaluation which allows us to consider CIFFD and CFFf as the algorithms with a very good potential for practical applications like energy-aware load balancing, which motivated our work. We emphasize the remarkable efficiency of our on-line approach (CFFf) which outperforms considerably simple off-line algorithms. Notwithstanding, the theoretical result proven only for the linear cost CFFf behaves very well when bin costs are monotone.

The approach presented in this paper is centralized and adapted to private Cloud infrastructures. Another challenge calling into question is the application of our packing approach into a distributed environment when information about resource availability is incomplete. 

We believe that the approach through packing is a powerful tool allowing one to perform a load-balancing in Clouds which ensures the realization of tasks with respect to their requirements while consuming the smallest quantity of electrical energy. For this reason we think to use this approach in our further multi-criteria optimization of a Cloud infrastructure. Among other criteria which we find  essential to study in this context are the efficient utilization of resources of a telecommunication network (the principle ``network-aware Clouds'') and the guarantee of meeting the QoS requirements expressed in customers' contracts, concerning, for instance, the task termination before a given dead-line or the task execution time limited by a given make-span.   

\section*{Acknowledgment}
St\'ep{}hane Henriot participation in this work was partially financed by the grant 2013-22 of PRES UniverSud Paris.

\bibliographystyle{IEEEtran}
\bibliography{suscomForCorr}

\begin{thebibliography}{10}
\providecommand{\url}[1]{#1}
\csname url@samestyle\endcsname
\providecommand{\newblock}{\relax}
\providecommand{\bibinfo}[2]{#2}
\providecommand{\BIBentrySTDinterwordspacing}{\spaceskip=0pt\relax}
\providecommand{\BIBentryALTinterwordstretchfactor}{4}
\providecommand{\BIBentryALTinterwordspacing}{\spaceskip=\fontdimen2\font plus
\BIBentryALTinterwordstretchfactor\fontdimen3\font minus
  \fontdimen4\font\relax}
\providecommand{\BIBforeignlanguage}[2]{{%
\expandafter\ifx\csname l@#1\endcsname\relax
\typeout{** WARNING: IEEEtran.bst: No hyphenation pattern has been}%
\typeout{** loaded for the language `#1'. Using the pattern for}%
\typeout{** the default language instead.}%
\else
\language=\csname l@#1\endcsname
\fi
#2}}
\providecommand{\BIBdecl}{\relax}
\BIBdecl

\bibitem{BalAyr+11}
J.~Baliga, R.~Ayre, K.~Hinton, and R.~S. Tucker, ``Green cloud computing:
  Balancing energy in processing, storage, and transport,'' \emph{Proc. of the
  IEEE}, vol.~99, no.~1, 2011.

\bibitem{BelAba+12}
A.~Beloglazov, J.~H. Abawajy, and R.~Buyya, ``Energy-aware resource allocation
  heuristics for efficient management of data centers for cloud computing,''
  \emph{Future Generation Comp. Syst.}, vol.~28, no.~5, 2012.

\bibitem{BelBuy+11}
A.~Beloglazov, R.~Buyya, Y.~C. Lee, and A.~Y. Zomaya, ``A taxonomy and survey
  of energy-efficient data centers and cloud computing systems,''
  \emph{Advances in Computers}, vol.~82, 2011.

\bibitem{BerlGel+10}
A.~Berl, E.~Gelenbe, M.~D. Girolamo, G.~Giuliani, H.~de~Meer, D.~M. Quan, and
  K.~Pentikousis, ``Energy-efficient cloud computing,'' \emph{Comput. J.},
  vol.~53, no.~7, 2010.

\bibitem{SriKan+08}
S.~Srikantaiah, A.~Kansal, and F.~Zhao, ``Energy aware consolidation for cloud
  computing,'' in \emph{Proc. of HotPower}, 2008.

\bibitem{CofGar+97}
E.~G. Coffman, Jr., M.~R. Garey, and D.~S. Johnson, ``Approximation algorithms
  for bin packing: a survey,'' in \emph{Approximation algorithms for NP-hard
  problems}.\hskip 1em plus 0.5em minus 0.4em\relax Boston, MA, USA: PWS
  Publishing Co., 1997.

\bibitem{Dos07}
G.~D\"o{}sa, ``The tight bound of first fit decreasing bin-packing algorithm is
  {FFD(I)$<$=11/9 OPT(I)+ 6/9},'' in \emph{Combinatorics, Algorithms,
  Probabilistic and Experimental Methodologies}, ser. LNCS.\hskip 1em plus
  0.5em minus 0.4em\relax Springer, 2007, vol. 4614.

\bibitem{KanPar03}
J.~Kang and S.~Park, ``Algorithms for the variable sized bin packing problem,''
  \emph{European J. of Oper. Research}, vol. 147, no.~2, pp. 365--372, 2003.

\bibitem{KinLan89}
N.~G. Kinnerseley and M.~A. Langston, ``Online variable-sized bin packing,''
  \emph{Discrete Applied Mathematics}, vol.~22, no.~2, pp. 143--148,
  1988--1989.

\bibitem{MenRom01}
N.~Menakerman and R.~Rom, ``Bin packing with item fragmentation,'' in
  \emph{Proc. of IWADS, in: LNCS}, 2001, pp. 313--324.

\bibitem{NaaRom02}
N.~Naaman and R.~Rom, ``Packet scheduling with fragmentation,'' in
  \emph{INFOCOM}, 2002.

\bibitem{ShaTam+06}
H.~Shachnai, T.~Tamir, and O.~Yehezkely, ``Approximation schemes for packing
  with item fragmentation,'' in \emph{Approximation and Online
  Algorithms}.\hskip 1em plus 0.5em minus 0.4em\relax Springer, 2006.

\bibitem{KraShe+75}
K.~L. Krause, V.~Y. Shen, and H.~D. Schwetman, ``Analysis of several
  task-scheduling algorithms for a model of multiprogramming computer
  systems,'' \emph{J. ACM}, vol.~22, no.~4, 1975.

\bibitem{KraShe+77}
------, ``Errata: ``analysis of several task-scheduling algorithms for a model
  of multiprogramming computer systems'','' \emph{J. ACM}, vol.~24, no.~3,
  1977.

\bibitem{ChuGra+06}
F.~R.~K. Chung, R.~L. Graham, J.~Mao, and G.~Varghese, ``Parallelism versus
  memory allocation in pipelined router forwarding engines,'' \emph{Theory
  Comput. Syst.}, vol.~39, no.~6, 2006.

\bibitem{EpsSte11}
L.~Epstein and R.~Stee, ``Improved results for a memory allocation problem,''
  \emph{Theory of Computing Systems}, vol.~48, no.~1, pp. 79--92, 2011.

\end{thebibliography}

\end{document}